\theoremstyle{plain}
\newtheorem{theorem}{Theorem}%[section]
\newtheorem*{theorem*}{Theorem}
\newtheorem{proposition}[theorem]{Proposition}
\newtheorem*{corollary*}{Corollary}
\newtheorem{lemma}[theorem]{Lemma}
\theoremstyle{remark}
\newtheorem{remark}[theorem]{Remark}
\newtheorem*{remark*}{Remark}
\newtheorem{example}[theorem]{Example}
\theoremstyle{definition}
\newtheorem{definition}[theorem]{Definition}
\newcommand{\field}[1][]{\mathbb{F}_{#1}}
\newcommand{\lclm}[1]{\left[#1\right]_\ell}
\newcommand{\matrixring}[2]{\mathcal{M}_{#1}(#2)}
\newcommand{\Fq}{\mathbb{F}_q}
\newcommand{\Fz}{\mathbb{F}_q(z)}
\begin{document}

\title[PGZ algorithm for differential convolutional codes]{Peterson-Gorenstein-Zierler algorithm for differential convolutional codes}

%    Remove any unused author tags.

%    author one information
\author[G\'{o}mez-Torrecillas]{Jos\'{e} G\'{o}mez-Torrecillas}
\address{IEMath-GR and Department of Algebra, University of Granada}
\curraddr{}
\email{gomezj@ugr.es}
\thanks{Research supported by grants MTM2016-78364-P and PID2019-110525GB-I00 from Agencia Estatal de Investigaci\'on and FEDER, and by grant \textup{A-FQM-470-UGR18} from Consejer\'{\i}a de Econom\'{\i}a y Conocimiento de la Junta de Andaluc\'{\i}a and Programa Operativo FEDER 2014-2020. 
 }
 \author[Lobillo]{F. J. Lobillo}
\address{CITIC and Department of Algebra, University of Granada}
\curraddr{}
\email{jlobillo@ugr.es}
\author[Navarro]{Gabriel Navarro}
\address{CITIC and Department of Computer Science and Artificial Intelligence, University of Granada}
\curraddr{}
\email{gnavarro@ugr.es}

\author[S\'{a}nchez-Hern\'{a}ndez]{Jos\'e Patricio S\'anchez-Hern\'andez}
\address{Department of Algebra, University of Granada}
\curraddr{}
\email{jpsanchez@correo.ugr.es}
\thanks{The fourth author was supported by The National Council of Science and Technology (CONACYT) with a scholarship  for a Postdoctoral Stay in the University of Granada.}

\subjclass[2010]{Primary }

\keywords{Differential convolutional code, cyclic convolutional code, differential Reed-Solomon codes, algebraic decoding algorithm.
}

\date{}

\dedicatory{}

\begin{abstract}
Differential Convolutional Codes with designed Hamming distance are defined, and an algebraic decoding algorithm, inspired by Peterson-Gorenstein-Zierler's algorithm, is designed for them.
\end{abstract}

\maketitle

\section{Introduction}
Linear convolutional codes of length $n$ can be understood (see \cite{Forney:1970, Johannesson/Zigangirov:1999, Rosenthal/Smarandache:1999}) as vector subspaces of $\Fz^n$, where $\Fz$ is the rational function field in the variable $z$ (which represents the delay operator) with coefficients in a field $\Fq$ with $q$ elements, with $q$ a power of a prime $p$.  Based upon this algebraic model, an alternative approach to cyclic convolutional codes to that of \cite{Piret:1976, Roos:1979} was proposed in \cite{Gomez/alt:2016}. While the first formalism led to general methods of construction of convolutional codes with cyclic structures \cite{GluesingSchmale:2004, Estrada/alt:2008, Lopez/Szabo:2013, Gomez/alt:2017a}, the second mathematical framework has been proved to be well suited to the design of efficient algebraic decoding algorithms \cite{Gomez/alt:2017b, Gomez/alt:2018a}. In the latter, the cyclic code is modeled from a left ideal of a suitable factor ring of a skew polynomial ring $\Fz[x;\sigma]$, where $\sigma$ is a field automorphism of $\Fz$. This is a non commutative polynomial ring, where the multiplication of the variable $x$ and the coefficients $a \in \Fz$ are twisted according to the rule $xa = \sigma(a)x$. Here we explore another option, namely, the ring of differential operators $\Fz[x;\delta]$, built from a derivation $\delta$ of $\Fz$. Now, the multiplication obeys the rule $xa = ax + \delta(a)$. We obtain convolutional codes that become MDS with respect to the Hamming distance, which will be called Reed-Solomon differential convolutional codes. We see that the decoding algorithm of Peterson-Gorenstein-Zierler  type from \cite{Gomez/alt:2018a} can be adapted to these differential convolutional codes. The proof of the correctness of this decoding algorithm in the differential setting does not follow straightforwardly from the arguments from \cite{Gomez/alt:2018a}, so we develop in this paper the mathematical tools needed to check it. We also include several examples, whose computations are done with the help of the Computer Algebra software SageMath \cite{sage}.

\section{Convolutional codes and distributed storage.}\label{DS}

Let us briefly recall from  \cite{Forney:1970, Roos:1979, GluesingSchmale:2004, Gomez/alt:2018b, Gomez/alt:2019} the notion of convolutional code.  Let \(\Fq[z] \subseteq \Fq(z) \subseteq \Fq((z))\) denote the ring of polynomials, field of rational functions and field of Laurent series over the finite field \(\Fq\), respectively. A rate \(k/n\) convolutional code is a \(k\)-dimensional vector subspace of \(\Fq((z))^n\) generated by a matrix with entries in \(\Fq(z)\). The idea behind the use of polynomials and Laurent series comes from the identification \(\Fq[z]^n \cong \Fq^n[z]\) (resp. \(\Fq((z))^n \cong \Fq^n((z))\)), i.e. its elements can be viewed as vectors of polynomials (resp. vectors of infinite sequences) or as polynomials of vectors (resp. infinite sequences of vectors) in \(\Fq\). Vectors in \(\Fq\) are usually referred to as words and a sequence of words as a sentence. The encoders (generator matrices) transform information sequences into code sequences. Formally, an information sequence is 
\(
\sum_{i} u_i z^i, \ u_i \in \Fq^k,
\)
and, by multiplication by a matrix 
\(G(z) \in \matrixring{k\times n}{\Fq(z)}\)
a sequence in the code is obtained
\(
\sum_{i} v_i z^i, \ v_i \in \Fq^n.
\)
Rational functions are used because they can be realized as linear shift registers with feedback (see \cite[Figure 2.1]{Johannesson/Zigangirov:1999}). From a mathematical viewpoint, the existence of a basis whose vectors belong to \(\Fq(z)^n\) implies that a rate \(k/n\) convolutional code can be equivalently defined as \(k\)-dimensional subspace of \(\Fq(z)^n\). In each communication process, the transmitted information is finite, so it is encoded by polynomials better than Laurent series. It is convenient to analyze convolutional codes in terms of polynomials. As pointed out in \cite[Proposition 1]{Gomez/alt:2019}, the map \(\mathcal{D} \mapsto \mathcal{D} \cap \Fq[z]^n\) defines a bijection between the set of rate \(k/n\) convolutional codes and the set of \(\Fq[z]\)-submodules of \(\Fq[z]^n\) of rank \(k\) that are direct summands of \(\Fq[z]^n\). This is a module-theoretical refinement of \cite[Theorem 3]{Forney:1970}.

There is a scenario in which Hamming distance in convolutional codes is the right one: The distributed storage. Let us explain it. 

We have some information which we want to store in \(n\) nodes. Different nodes are not usually accessible with the same reliability, depending on the different conditions at each time. So we want to distribute our information taking in mind that we are going to retrieve it from a subset of the nodes, although we are going to request the information stored in each node. The storage process can be described as follows. 

Let \(\sum_{i} v_i z^i \in \field{}^n[z]\) be the encoded sequence, and \(v_i = (v_{i,1}, \dots, v_{i,n}) \in \field{}^n\) for each \(i\). Then the encoded sequence can be reorganized as \((f_1, \dots, f_n) \in \field{}[z]^n\) where \(f_j = \sum_{i} v_{i,j} z^i\). Each polynomial \(f_j\) is transmitted to a different node, where it is stored. When the information is needed, the owner request to each node its polynomial, which is transmitted back. This is the key point, each polynomial comes from a different channel, so the probability of errors is different and it can be expected that the Hamming weight  fits better than the free distance in this scenario. 

There are some previous solutions to this process. For instance, in the distributed storage in an open network each channel has its own correction capability, so each channel can be considered as an erasure channel, i.e. an error in the transmission means that the corresponding information is not received. Under that viewpoint, \cite{Balaji/alt:2018} has recently provided a complete overview of different techniques involved in solving the distributed storage problem. None of those techniques makes use of convolutional codes, mainly because the most used decoding algorithms for convolutional codes work with respect to the free distance, which is not suitable for distributed storage. 

The idea behind the use of non commutative structures when dealing with convolutional codes comes from \cite{Piret:1976}, where it is proven that a classical cyclic structure on words does not allow to build non block codes. However skew structures do, as finally clarified in \cite[Theorems 3.12 and 3.14]{Lopez/Szabo:2013}. With this idea in mind, in \cite{Gomez/alt:2016} a new perspective of cyclicity is provided when considering convolutional codes as vector subspaces of \(\Fq(z)^n\). This viewpoint is the one we are going to exploit in the forthcoming sections.

\section{Differential convolutional codes}

In this section, differential convolutional codes are to be defined. They are built from a derivation of the rational function field $\Fz$, where $\Fq$ is a finite field of characteristic $p$. We include some basic facts on these derivations that are needed subsequently. We next construct the codeword ambient algebra as a factor algebra of a differential operator ring in one variable. We thus recall some basic properties of this kind of rings.

Fix a non zero derivation $\delta : \Fz \to \Fz$, that is, an additive map subject to the condition 
\begin{equation}\label{Leibniz}
\delta(ab)=a\delta(b)+\delta(a)b, \qquad \text{ for all } a, b \in \Fz. 
\end{equation}
Since $a^{q-1} = 1$ for every nonzero $a \in \Fq$, we get from \eqref{Leibniz} that $\delta(a) = 0$ and, therefore, $\delta$ is $\Fq$--linear. A straightforward argument, based also on \eqref{Leibniz}, shows that, for every $f \in \Fz$,
\begin{equation}
\delta (f) = f'\delta(z),
\end{equation}
where $f'$ denotes the usual derivative of $f$ with respect to $z$. Hence, the derivation $\delta$ is determined by the choice of a non zero $\delta(z) \in \Fz$. 

Let $\mathbb{F}_q(z^p)$ denote the subfield of $\Fz$ generated by $\Fq$ and $z^p$. Clearly, $\mathbb{F}_q(z^p)$ is contained in the subfield of constants of $\delta$,  defined as 
\[
K = \{ f \in \Fz : \delta(f) = 0 \}.
\]
 Since the degree of $\Fz$ over $\mathbb{F}_q(z^p)$ is the prime $p$, we get that $K = \mathbb{F}_q(z^p)$. Obviously, $\delta : \Fz \to \Fz$ becomes a $K$--linear endomorphism, 
whose minimal polynomial is, by \cite[Lemma 1.5.1, Lemma 1.5.2]{Jacobson:1996},  of the form $x^p-\gamma x$ for a suitable $\gamma \in K$. Therefore, 
\begin{equation}\label{polmindelta}
\delta^p - \gamma \delta = 0,
\end{equation}
which implies that $\gamma = \frac{\delta^p(z)}{\delta(z)}$. Since $p$ equals the dimension of $\Fz$ as a $K$--vector space, we can choose a cyclic vector $\alpha \in \Fz$ for $\delta$, so that $\{\alpha, \delta(\alpha), \dots, \delta^{p-1}(\alpha) \}$ is a $K$--basis of $\Fz$.

Given a subset $\{ c_1, \dots, c_n \} \subseteq \Fz$, define, following \cite{Lam/Leroy:1988}, the \emph{Wronskian matrices}  
\[
W_{k}(c_1,\dots,c_n)=\left( \begin{array}{ccccc}
c_1 &c_2 & \cdots &c_n \\
\delta(c_1)& \delta(c_2) & \cdots &  \delta(c_n)  \\
\vdots &\vdots & \ddots& \vdots\\
\delta^{k-1}(c_1)& \delta^{k-1}(c_2) & \cdots &  \delta^{k-1}(c_n)
\end{array} \right) 
\]
for each $k \geq 1$. We record the following well-known result  (see e. g. \cite[Theorem 4.9]{Lam/Leroy:1988})  for future reference. 

\begin{lemma}\label{lemma 3}
A subset $\{c _1, \dots, c_n \} \subseteq \Fz$ is linearly independent over $K$ if and only if the Wronskian $W_{n}(c_1, \dots, c_n)$
is an invertible matrix.
\end{lemma}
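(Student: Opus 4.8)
The plan is to prove the equivalence by relating the $K$-linear independence of $\{c_1,\dots,c_n\}$ to the invertibility of the $n\times n$ Wronskian matrix $W_n(c_1,\dots,c_n)$, treating $\Fz$ as a $K$-vector space of dimension $p$ on which $\delta$ acts $K$-linearly.

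First I would handle the contrapositive of one direction: suppose the $c_i$ are $K$-linearly \emph{dependent}, say $\sum_{i} \lambda_i c_i = 0$ with $\lambda_i \in K$ not all zero. Since each $\lambda_i$ is a constant of $\delta$, applying $\delta^{j}$ to this relation and using $\delta(\lambda_i)=0$ together with the Leibniz rule \eqref{Leibniz} gives $\sum_i \lambda_i \delta^{j}(c_i)=0$ for every $j\geq 0$. This says precisely that the column vector $\transpose{(\lambda_1,\dots,\lambda_n)}$ lies in the right kernel of every $W_k$, and in particular of $W_n$, so $W_n(c_1,\dots,c_n)$ is singular. This establishes that invertibility of $W_n$ forces $K$-linear independence.

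For the converse I would argue that $K$-linear independence implies $W_n$ is invertible. I expect the natural route to be induction on $n$, which is where the main work lies. Assume the result for fewer than $n$ elements. Given $c_1,\dots,c_n$ independent over $K$, the first $n-1$ of them are also independent, so by the inductive hypothesis $W_{n-1}(c_1,\dots,c_{n-1})$ is invertible; one then shows that adjoining $c_n$ keeps the Wronskian nonsingular precisely because $c_n$ is not a $K$-linear combination of $c_1,\dots,c_{n-1}$. Concretely, if $W_n$ were singular, a nontrivial $K$-combination of its columns would vanish, and I would argue that the constancy of the coefficients (they live in $K=\ker\delta$) lets one integrate the resulting differential relation back to a genuine $K$-linear dependence among the $c_i$, contradicting independence. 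This is the ``logarithmic derivative'' type manoeuvre that converts a vanishing Wronskian into an actual dependence relation, and making the coefficients provably constant is the delicate point.

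The hard part will be controlling that the coefficients produced in the singular case are honestly in $K$ rather than merely in $\Fz$. The key structural fact that makes this work is that $\delta$ satisfies the minimal polynomial $x^p-\gamma x$ from \eqref{polmindelta} and that $\dim_K \Fz = p$ with a cyclic vector available; this constrains how the iterated derivatives $\delta^{j}(c_i)$ can be $\Fz$-linearly related and is what forces the offending coefficients to be $\delta$-constant. Rather than reprove this from scratch, I would invoke the cited Wronskian criterion \cite[Theorem 4.9]{Lam/Leroy:1988}, which is stated for precisely this differential setting, and note that the forward direction above gives the easy implication directly; the substance of the lemma is thus the standard Wronskian-independence theorem for derivations specialized to the field extension $\Fz/K$.
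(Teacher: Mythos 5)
Your proposal is fine and ends up in essentially the same place as the paper: the paper gives no proof at all, recording the lemma as well known and citing \cite[Theorem 4.9]{Lam/Leroy:1988}, which is exactly where you land for the substantive direction. Your self-contained argument for the easy implication (a $K$-dependence relation passes through every $\delta^j$ because the coefficients are constants, putting a nonzero vector in the right kernel of $W_n$) is correct, and your decision not to reprove the converse but to defer to the cited Wronskian criterion matches the paper's treatment.
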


Let $R = \Fz[x;\delta]$ be the differential operator ring built from the derivation $\delta$. Its elements are polynomials in $x$ with coefficients, written of the left, from $\Fz$, and whose multiplication is based on the rule $xf = fx + \delta (f)$ for every $f \in \Fz$. The ring $R$ is a non-commutative Euclidean domain, that is, there are left and right Euclidean division algorithms. In particular, every left ideal of $R$ is principal, that is, of the form $Rf = \{rf : r \in R \}$ for some $f \in R$. We say then that $f$ is a generator of the left ideal $Rf$. Given $f, g \in R$, we say that $f$ is a \emph{right divisor} of $g$ (or $g$ is a \emph{left multiple} of $f$), if $g \in Rf$. This situation will be denoted by $f |_r g$.

The \emph{greatest common right divisor} of $f, g \in R$, denoted by $(f,g)_r$ is defined as the monic generator of $Rf + Rg$. Similarly, the \emph{left least common multiple} $[f,g]_\ell$ of $f$ and $g$ is defined as the monic generator of $Rf \cap Rg$.

The center of $R$ is, by \cite[1.12.32]{Jacobson:1996}, $K[x^p - \gamma x]$, the subring of $R$ generated by $K$ and $x^p - \gamma x$. In particular, we get that $R(x^p - \gamma x)$ is an ideal of $R$, and we can consider the $K$--algebra 
\[
\mathcal{R} = \frac{R}{R(x^p - \gamma x)}. 
\]

Now, let us fix the natural $\Fz$--basis $\{ 1, x , \dots, x^{p-1} \}$ of $\mathcal{R}$, and the corresponding coordinate isomorphism of $\Fz$--vector spaces 
\[
\mathfrak{v} : \mathcal{R} \to \Fz^p.
\]
 We stress that we are identifying an element of $\mathcal{R}$ with its unique representative in $R$ of degree smaller than $p$. Recall that the convolutional codes over $\Fq$ of length $p$ are the $\Fz$--vector subspaces of $\Fz^p$.

\begin{definition}
A convolutional code $C \subseteq \Fz^p$ is said to be a \emph{differential convolutional code}  if $\mathfrak{v}^{-1}(C)$ is a left ideal of $\mathcal{R}$. 
\end{definition}

Since the isomorphism $\mathfrak{v}$ is an isometry for the natural Hamming  distances on $\mathcal{R}$ and $\Fz^p$, we will often consider a differential convolutional code just as a left ideal of $\mathcal{R}$.  We stress that Hamming weight of a vector of $\Fz^p$ is the number of nonzero coordinates (see Section \ref{DS} for the description of a situation where the Hamming distance is better suited than the free distance).

We define by recursion, following \cite{Lam/Leroy:1988},  $N_k(a)$, for $a \in \Fz$ as follows:

$$N_0(a)=1,$$
$$N_{n+1}(a)=N_n(a)a+\delta(N_n(a)).$$

We have, for $0 \neq a \in \Fz$,  the following identity \cite[Proposition 2.9(4)]{Lam/Leroy:1988}: 

\begin{equation}\label{N_k(a)}
\delta^k(a)=N_k(L(a))a,
\end{equation}
where
\[
L(a) = \delta(a)a^{-1}.
\]

Let  $g(x)=\sum_{i=0}^{n}g_ix^i\in \Fz[x;\delta]$ and $a \in \Fz$.  By \cite[Lemma 2.4]{Lam/Leroy:1988},  
\begin{equation}\label{root1}
g(x)=q(x)(x-a)+g[a],
\end{equation}
 where
\begin{equation}\label{root2}
g[a] =\sum_{i=0}^{n}g_iN_i(a) \in \Fz.
\end{equation}
The value $g[a]$ given in \eqref{root2} is called \emph{right evaluation of $g$ at $a$}. Observe that $x-a$ is a right divisor of  $g(x)$ if and only if $g[a] = 0$. We say then that $a$ is a \emph{right root} of $g(x)$.  A straightforward computation, using that $\delta^p = \gamma \delta$, shows that $L(c)$ is a right root of $x^p-\gamma x$ for every $0 \neq c \in \Fz$.

\begin{proposition}\label{DCCdim}
Let $\{ c_1, \dots, c_m \} \subseteq \Fz$ be linearly independent over $K$, with $m \leq p-1$, and set  
\[
g = [x-L(c_1), \dots, x-L(c_m)]_\ell.
\]
Then $deg(g) = m$, $g$ is a right divisor of $x^p- \gamma x$,  and $\mathfrak{v}(\mathcal{R}g)$ is the left kernel of the Wronskian matrix 
\[
W_{p}(c_1,\dots,c_m).
\]
In other words, $C =  \mathfrak{v}(\mathcal{R}g)$ is a DCC of dimension $p-m$ with parity-check matrix $W_{p}(c_1,\dots,c_m)$. 
\end{proposition}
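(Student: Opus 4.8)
The plan is to reduce every assertion to linear algebra over $\Fz$ by means of a single bridge identity, and then to read off the conclusions from Lemma~\ref{lemma 3}.

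First I would record the bridge. For $h = \sum_i h_i x^i \in R$ and $0 \neq c \in \Fz$, combining the right-evaluation formula \eqref{root2} with \eqref{N_k(a)} gives
\[
h[L(c)] = \sum_i h_i N_i(L(c)) = \Bigl( \sum_i h_i \delta^i(c) \Bigr) c^{-1},
\]
so that $x - L(c) \mid_r h$ if and only if $\sum_i h_i \delta^i(c) = 0$; that is, $L(c)$ is a right root of $h$ exactly when the coefficient vector of $h$ annihilates the column $(c, \delta(c), \delta^2(c), \dots)$ of a Wronskian. Since $Rg = \bigcap_{j=1}^m R(x - L(c_j))$ by definition of the left least common multiple, membership $h \in Rg$ is equivalent to the system $\sum_i h_i \delta^i(c_j) = 0$ for $j = 1, \dots, m$, i.e.\ to the coefficient vector of $h$ lying in the left kernel of the relevant Wronskian.

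Next I would settle divisibility and degree together. Because each $L(c_j)$ is a right root of $x^p - \gamma x$, we have $x^p - \gamma x \in R(x - L(c_j))$ for all $j$, hence $x^p - \gamma x \in Rg$ and $g \mid_r (x^p - \gamma x)$. For the degree, recall that $g$, being the monic generator of $Rg$, is the monic element of least degree in $Rg$. Writing a monic $g = \sum_{i=0}^d g_i x^i$, the bridge turns $g \in Rg$ into $(g_0, \dots, g_d)\, W_{d+1}(c_1, \dots, c_m) = 0$. If $d \leq m-1$, the leading $(d+1)\times(d+1)$ block of this rectangular Wronskian is $W_{d+1}(c_1, \dots, c_{d+1})$, which is invertible by Lemma~\ref{lemma 3} applied to the $K$-independent set $\{c_1, \dots, c_{d+1}\}$; hence the rows of $W_{d+1}(c_1, \dots, c_m)$ are $\Fz$-independent, its left kernel is zero, and no monic element of degree $< m$ lies in $Rg$, forcing $\deg g \geq m$. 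For $d = m$, setting $g_m = 1$ yields the square system $(g_0, \dots, g_{m-1})\, W_m(c_1, \dots, c_m) = -(\delta^m(c_1), \dots, \delta^m(c_m))$, which has a unique solution since $W_m(c_1, \dots, c_m)$ is invertible; the resulting polynomial lies in $Rg$, so by minimality $\deg g = m$.

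Then I would identify the kernel and read off the final statement. As $g \mid_r (x^p - \gamma x)$ we have $R(x^p - \gamma x) \subseteq Rg$, so the left ideal $\mathcal{R}g$ of $\mathcal{R}$ is identified, through degree-$<p$ representatives, with $\{ h \in R : \deg h < p,\ g \mid_r h \}$. By the bridge, $g \mid_r h$ is equivalent to $\mathfrak{v}(h)\, W_p(c_1, \dots, c_m) = 0$, whence $\mathfrak{v}(\mathcal{R}g)$ is precisely the left kernel of $W_p(c_1, \dots, c_m)$. Since $\mathcal{R}g$ is a left ideal, $C = \mathfrak{v}(\mathcal{R}g)$ is a differential convolutional code; its leading $m\times m$ block being invertible shows $W_p(c_1, \dots, c_m)$ has rank $m$, so $\dim_{\Fz} C = p - m$ and $W_p(c_1, \dots, c_m)$ is a parity-check matrix for $C$.

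The one genuinely delicate step is the lower bound $\deg g \geq m$: everything else is a direct translation, but ruling out common left multiples of degree $< m$ requires that the rectangular Wronskians $W_{d+1}(c_1, \dots, c_m)$ have full row rank for $d < m$, which is exactly where Lemma~\ref{lemma 3} must be invoked on proper subfamilies of the $c_j$. The conceptual engine behind the whole argument is the bridge identity, which converts right roots at the points $L(c_j)$ into Wronskian kernel conditions; once it is in place, the degree count, the divisibility, and the parity-check description all follow from invertibility of appropriate square Wronskians.
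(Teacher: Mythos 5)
Your proof is correct and follows essentially the same route as the paper's: both convert right-divisibility by $x - L(c_j)$ into a left-kernel condition on a Wronskian via the identity $N_k(L(c)) = \delta^k(c)c^{-1}$ coming from \eqref{N_k(a)} (your ``bridge'' is exactly the paper's relation $N \cdot \mathrm{diag}(c_1,\dots,c_m) = W_p(c_1,\dots,c_m)$), and then invoke Lemma~\ref{lemma 3} on square Wronskian blocks to pin down the degree and the dimension. You are slightly more explicit than the paper on two points it leaves implicit --- the upper bound $\deg g \leq m$, which the paper takes from the general degree bound on left least common multiples while you construct the monic degree-$m$ common left multiple by solving an invertible linear system, and the verification that $g$ right-divides $x^p - \gamma x$ --- but these are presentational rather than substantive differences.
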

\begin{proof}
Clearly, $f=\sum_{k=0}^{p-1}f_kx^k\in \mathcal{R}g$ if and only if $x-L(c_j)|_r f$ for all $j=1,\dots,m$. This is equivalent, by \eqref{root1} and \eqref{root2}, to the condition 
$(f_0,\dots, f_{p-1})N=0$, where
 $$N= \left( \begin{array}{ccccc}
N_0(L(c_1)) & N_0(L(c_2)) & \cdots &N_{0}(L(c_m))  \\
N_1(L(c_1))&N_1(L(c_2))& \cdots & N_{1}(L(c_m))  \\
\vdots &\vdots & \ddots& \vdots\\
N_{p-1}(L(c_1))& N_{p-1}(L(c_2)) & \cdots &  N_{p-1}(L(c_m))
\end{array} \right). $$
By \eqref{N_k(a)},
\begin{equation}\label{NW}
N \cdot diag(c_1,\dots,c_m) = W_{p}(c_1, \dots, c_m).
\end{equation}
Since $diag(c_1, \cdots, c_m)$ is invertible, we get that the left kernel of the wronskian matrix $W_{p}(c_1, \dots, c_m)$ is $C = \mathfrak{v}(\mathcal{R}g)$.  Now, if the degree of $f$ is smaller than $m$, we get that 
$\mathfrak{v}(f) W_{m}(c_1, \dots, c_m) = 0$. By Lemma \ref{lemma 3}, $W_{m}(c_1, \dots, c_m)$ is invertible, which implies that $\mathfrak{v}(f) = 0$. We thus deduce that every nonzero $f \in \mathcal{R}g$ is of degree at least $m$. In particular, $deg (g) = m$. Since $dim_{\Fz} (\mathcal{R}/\mathcal{R}g) = deg(g)$, we get that the dimension of $C$ is $p-m$. 

\end{proof}

\begin{remark}
Differential convolutional codes described in Proposition \ref{DCCdim} are instances of $(\sigma,\delta)-W$--codes in the sense of \cite{Boulagouaz/Leroy:2013}. These codes are recognized as left kernels of suitable generalized Vandermonde matrices in \cite[Proposition 4]{Boulagouaz/Leroy:2013}. 
\end{remark}

Next, we show that, if the set $\{c_1, \dots, c_m\}$ is carefully chosen, then we obtain MDS codes with respect to  the Hamming distance. 

\begin{theorem} \label{Teo 1} Let $\alpha \in \Fz$ be a cyclic vector for $\delta$. For $1 \leq d \leq p$ and $0 \leq r \leq p-d$, let $C$ be the differential convolutional code generated, as a left ideal of $\mathcal{R}$, by
\[
g = [x-L(\delta^{r}(\alpha)),\dots,x-L(\delta^{r+d-2}(\alpha))]_\ell.
\]
Then the  dimension of $C$ is $p- d+ 1$ and its minimum Hamming distance is $d$. 
\end{theorem}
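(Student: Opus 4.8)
The plan is to separate the dimension computation, the upper bound on the minimum distance, and the lower bound, and to reduce the last one to a Wronskian nonsingularity statement governed by Lemma~\ref{lemma 3}.

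First I would read the dimension straight off Proposition~\ref{DCCdim}. Since $\alpha$ is a cyclic vector, $\{\alpha,\delta(\alpha),\dots,\delta^{p-1}(\alpha)\}$ is a $K$-basis of $\Fz$; as $0\le r$ and $r+d-2\le p-2$, the elements $c_t:=\delta^{r+t-1}(\alpha)$, $t=1,\dots,d-1$, are distinct members of this basis and hence $K$-linearly independent. Proposition~\ref{DCCdim} with $m=d-1$ then gives $\deg(g)=d-1$ and $\dim_{\Fz}C=p-(d-1)=p-d+1$. Being an $\Fz$-linear code of length $p$ and dimension $p-d+1$, $C$ satisfies the Singleton bound $\distance(C)\le p-(p-d+1)+1=d$, so the whole content is the reverse inequality $\distance(C)\ge d$.

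For the lower bound I would argue by contradiction from the parity-check description. Suppose $0\ne f=\sum_{i=0}^{p-1}f_ix^i\in\R g$ has Hamming weight $w\le d-1$, with support $\{i_1<\dots<i_w\}$. By Proposition~\ref{DCCdim}, $\tovector(f)$ lies in the left kernel of $W_p(c_1,\dots,c_{d-1})$; restricting these equations to the support and to the first $w$ check columns gives the homogeneous system
\[
\sum_{\ell=1}^{w} f_{i_\ell}\,\delta^{\,i_\ell+r+t-1}(\alpha)=0,\qquad t=1,\dots,w.
\]
Writing $\delta^{\,i_\ell+r+t-1}(\alpha)=\delta^{t-1}\bigl(\delta^{\,i_\ell+r}(\alpha)\bigr)$, the coefficient matrix is exactly the Wronskian $W_w\bigl(\delta^{\,i_1+r}(\alpha),\dots,\delta^{\,i_w+r}(\alpha)\bigr)$. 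By Lemma~\ref{lemma 3} this matrix is invertible whenever the family $\delta^{\,i_1+r}(\alpha),\dots,\delta^{\,i_w+r}(\alpha)$ is $K$-linearly independent, and then the system forces $f_{i_1}=\dots=f_{i_w}=0$, contradicting $f\ne 0$.

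The whole estimate thus hinges on a single claim: for every $i_1<\dots<i_w$ in $\{0,\dots,p-1\}$ with $w\le d-1$, the shifted family $\{\delta^{\,i_\ell+r}(\alpha)\}_\ell$ is $K$-linearly independent. When all exponents satisfy $i_\ell+r\le p-1$ this is immediate, as the family is then a sub-family of the cyclic basis. The step I expect to be the main obstacle is the remaining range: once an exponent reaches $p$, relation~\eqref{polmindelta} yields $\delta^{p}(\alpha)=\gamma\,\delta(\alpha)$ and hence $\delta^{e}(\alpha)=\gamma\,\delta^{\,e-(p-1)}(\alpha)$ for $e\ge p$, so the top derivatives fold back onto lower basis vectors and $K$-independence is no longer automatic. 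Settling this is exactly where the differential setting diverges from the skew-polynomial treatment of \cite{Gomez/alt:2018a}; I would attack it using the minimality of $x^p-\gamma x$ for $\delta$ together with the cyclicity of $\alpha$ and the bound $r\le p-d$ to try to exclude any nontrivial $K$-linear relation among the $\delta^{\,i_\ell+r}(\alpha)$.
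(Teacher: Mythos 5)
Your reduction is essentially the paper's: the dimension and the parity-check matrix $W_p(\delta^r(\alpha),\dots,\delta^{r+d-2}(\alpha))$ are read off Proposition~\ref{DCCdim}, and the distance bound is reduced to the invertibility of square submatrices that are (transposes of) Wronskians of the shifted family $\delta^{i_1+r}(\alpha),\dots,\delta^{i_w+r}(\alpha)$, to which Lemma~\ref{lemma 3} is applied. (The paper works with $(d-1)\times(d-1)$ submatrices and does not invoke Singleton; that difference is cosmetic.) For $r=0$ your argument is complete: all exponents $i_\ell$ lie in $\{0,\dots,p-1\}$, so the shifted family is a subfamily of the cyclic basis and its $K$-independence is automatic. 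Since $r=0$ is the only case used later in the paper (Section 4 and both examples), this already covers everything that is actually needed.

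The step you flag as the ``main obstacle'' is, however, a genuine gap in your writeup, and it cannot be closed, because the missing claim is false once $r\ge 1$. From $\delta^p=\gamma\delta$ one gets $\delta^{p-1+j}(\alpha)=\gamma\,\delta^{j}(\alpha)$ for every $j\ge 1$, so by \eqref{N_k(a)} and \eqref{root2} the polynomial $f=\gamma-x^{p-1}$ satisfies $f[L(\delta^{j}(\alpha))]=\gamma-N_{p-1}(L(\delta^{j}(\alpha)))=\gamma-\gamma=0$ for all $j\ge 1$; hence $f\in\mathcal{R}g$ whenever $r\ge 1$, and $f$ has Hamming weight at most $2$ (weight $1$ when $\gamma=0$). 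Equivalently, row $p-1$ of the parity-check matrix is $\gamma$ times row $0$, so any selection of rows containing both indices $0$ and $p-1$ yields a $K$-dependent shifted family and a singular Wronskian. Thus the theorem as stated fails for $r\ge 1$ and $d\ge 3$. Note that the paper's own proof has exactly the same hole: it applies Lemma~\ref{lemma 3} to $W_{d-1}(\delta^{k_1+r}(\alpha),\dots,\delta^{k_{d-1}+r}(\alpha))$ without verifying the $K$-independence hypothesis, which holds only when the exponents stay below $p$, i.e.\ when $r=0$. So your diagnosis of where the difficulty sits is exactly right; the resolution is not to prove the claim but to restrict the statement to $r=0$.
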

\begin{proof}
The statement about the dimension of $C$ follows from Proposition \ref{DCCdim}, which also says that a parity-check matrix of $C$ is 
\[
W_{p}(\delta^r(\alpha), \dots, \delta^{r+d-2}(\alpha)). 
\]
For any submatrix of order $d-1$ 
$$M = \left( \begin{array}{ccccc}
\delta^{k_1 + r}(\alpha) & \delta^{k_1+r+1}(\alpha)&  \cdots &\delta^{k_1 + r+d-2}(\alpha)  \\
\delta^{k_2+ r}(\alpha)& \delta^{k_2+ r+1}(\alpha)& \cdots & \delta^{k_2+ r+d-2}(\alpha)  \\
\vdots &\vdots & \ddots& \vdots\\
\delta^{k_{d-1}+ r}(\alpha)& \delta^{k_{d-1}+ r+1}(\alpha) & \cdots & \delta^{k_{d-1} + r+d-2}(\alpha)
\end{array} \right), $$
 where $\{k_1,\dots, k_{d-1}\}\subseteq\{0,\dots,p-1\}$, we see that
 \[
 M = W_{d-1}(\delta^{k_1 + r}{(\alpha)},\dots,\delta^{k_{d-1}+ r}{(\alpha}))^{tr},
  \]
 which is, by Lemma \ref{lemma 3}, invertible. Hence, the Hamming distance of $C$ is $d$. 
 \end{proof}

\begin{definition}
We call the code defined in Theorem \ref{Teo 1} a  \emph{Reed Solomon}  (RS)  differential convolutional code  of designed Hamming  distance $d$. 
\end{definition}

\section{A Peterson-Gorenstein-Zierler decoding algorithm}
In this section we design a decoding algorithm for RS differential convolutional codes inspired by the classical Peterson-Gorenstein-Zierler decoding algorithm.

Let $C$ be an RS differential convolutional code of designed distance $d$ generated, as left ideal of $\mathcal{R}$, by 
\[
g= [x-L(\alpha),x-L(\delta(\alpha)),\dots,x-L(\delta^{d-2}(\alpha))]_\ell,
\]
where $\alpha \in \Fz$ is a cyclic vector for $\delta$. By Theorem \ref{Teo 1}, the Hamming distance of $C$ is  $d$, and  $\tau=\lfloor (d-1)/2 \rfloor$ is then the error correction capacity of $C$.  

Let $c\in C$ be a codeword that is transmitted through a noisy channel, and let 
\[
y= \sum_{j=0}^{p-1}y_jx^j \in \mathcal{R}
\]
be the received polynomial. We may decompose $y = c + e$, where
\[
e=e_1x^{k_1}+\cdots+e_vx^{k_v} \in \mathcal{R}
\]
 is the error polynomial. Assume that $v\leq \tau$.

For each $0 \leq i \leq p-1$, the $i-$th syndrome $s_i$ of the received polynomial is computed as 
\begin{equation}\label{synd}
s_i = y[L(\delta^i(\alpha))] =  \sum_{j=0}^{p-1}y_jN_j(L(\delta^i(\alpha))); 
\end{equation}
the remainder of the left division of $y$ by $x-L(\delta^i(\alpha))$, see \eqref{root1} and \eqref{root2}.

\begin{proposition} \label{error values}The error values vector $(e_1,\dots,e_v)$ satisfies, for  $ 0 \leq i \leq 2\tau-1$, 
\begin{equation}\label{2tau-1}
\delta^i(\alpha)s_i=\sum_{j=1}^{v}e_j\delta^{i+k_j}(\alpha)
\end{equation}
and, indeed, is the unique solution of the linear system
\begin{equation} \label{delta i s_i}
\delta^i(\alpha)s_i=\sum_{j=1}^{v}e_j\delta^{i+k_j}(\alpha), \qquad i = 0, \dots, v-1.
\end{equation}
\end{proposition}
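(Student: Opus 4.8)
The plan is to establish the identity \eqref{2tau-1} first, and then derive unique solvability of the truncated system \eqref{delta i s_i} from the properties of the Wronskian matrices already at our disposal. For the identity, the starting point is the syndrome formula \eqref{synd} together with the decomposition $y = c + e$. Since $c \in C = \mathcal{R}g$ and $g$ is a left multiple of each $x - L(\delta^i(\alpha))$ for $0 \leq i \leq d-2$, the right evaluation $c[L(\delta^i(\alpha))]$ vanishes in that range by the remark following \eqref{root2}. Hence $s_i = y[L(\delta^i(\alpha))] = e[L(\delta^i(\alpha))]$ for $0 \leq i \leq d-2$, and in particular for $0 \leq i \leq 2\tau - 1$ since $2\tau \leq d-1$.

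The next step is to compute $e[L(\delta^i(\alpha))]$ explicitly. Writing $e = \sum_{j=1}^{v} e_j x^{k_j}$ and applying \eqref{root2}, I get $s_i = \sum_{j=1}^{v} e_j N_{k_j}(L(\delta^i(\alpha)))$. The key move is to convert each $N_{k_j}$ term back into a derivative of $\alpha$ using \eqref{N_k(a)}, which reads $\delta^k(a) = N_k(L(a))a$. Applying this with $a = \delta^i(\alpha)$ and $k = k_j$ gives $N_{k_j}(L(\delta^i(\alpha)))\, \delta^i(\alpha) = \delta^{k_j}(\delta^i(\alpha)) = \delta^{i+k_j}(\alpha)$. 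Multiplying the syndrome expression through by $\delta^i(\alpha)$ and substituting yields exactly $\delta^i(\alpha) s_i = \sum_{j=1}^{v} e_j \delta^{i+k_j}(\alpha)$, which is \eqref{2tau-1}. The small subtlety worth checking is that $\delta^i(\alpha) \neq 0$ throughout the relevant range, so that $L(\delta^i(\alpha))$ is defined; this is guaranteed because $\alpha$ is a cyclic vector for $\delta$, so $\{\alpha, \delta(\alpha), \dots, \delta^{p-1}(\alpha)\}$ is a $K$-basis of $\Fz$ and its members are in particular nonzero.

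For the uniqueness claim, I restrict \eqref{2tau-1} to $i = 0, \dots, v-1$, which is legitimate since $v \leq \tau \leq 2\tau$. The coefficient matrix of the system \eqref{delta i s_i} in the unknowns $(e_1, \dots, e_v)$ is $\big(\delta^{i+k_j}(\alpha)\big)_{0 \leq i \leq v-1,\, 1 \leq j \leq v}$. I recognize this as a Wronskian-type matrix: its $(i,j)$ entry is $\delta^{i}\big(\delta^{k_j}(\alpha)\big)$, so up to transposition it is $W_{v}\big(\delta^{k_1}(\alpha), \dots, \delta^{k_v}(\alpha)\big)$, exactly the structure encountered in the proof of Theorem \ref{Teo 1}. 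By Lemma \ref{lemma 3}, this matrix is invertible precisely when $\{\delta^{k_1}(\alpha), \dots, \delta^{k_v}(\alpha)\}$ is linearly independent over $K$; and since the $k_j$ are distinct elements of $\{0, \dots, p-1\}$, these are distinct members of the $K$-basis furnished by the cyclic vector $\alpha$, hence independent. Therefore the system has a unique solution, and since the true error values $(e_1, \dots, e_v)$ satisfy it, they are that solution.

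The only real obstacle is bookkeeping the index range to confirm that the syndromes $s_i$ genuinely annihilate the codeword part up through $i = 2\tau - 1$; this rests on the inequality $2\tau = 2\lfloor (d-1)/2 \rfloor \leq d-1$, so the syndromes used lie within the range where $g$ forces $c[L(\delta^i(\alpha))] = 0$. Everything else is a direct application of \eqref{N_k(a)} and Lemma \ref{lemma 3}, with no genuinely hard step.
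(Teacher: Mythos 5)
Your proposal is correct and follows essentially the same route as the paper: vanishing of the codeword evaluations reduces the syndrome to a sum over the error positions, identity \eqref{N_k(a)} converts the $N_{k_j}$ terms into higher derivatives of $\alpha$, and invertibility of the transposed Wronskian $W_v(\delta^{k_1}(\alpha),\dots,\delta^{k_v}(\alpha))$ via Lemma \ref{lemma 3} gives uniqueness. The extra checks you include (that $\delta^i(\alpha)\neq 0$, that $2\tau-1\leq d-2$, and that the $\delta^{k_j}(\alpha)$ are $K$-independent because $\alpha$ is a cyclic vector) are details the paper leaves implicit.
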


\begin{proof}
Whenever $0\leq i \leq  2\tau-1$, we have, by \eqref{root1} and \eqref{root2}, that the right evaluations of $c$ at $L(\delta^i(\alpha))$ are zero and hence, 
\begin{equation*}
\begin{array}{rclr}s_i&=& \sum_{j=0}^{p-1}y_jN_j(L(\delta^i(\alpha))) & \\&
= &\sum_{j=1}^{v}e_jN_{k_j}(L(\delta^i(\alpha))) & 
\\
&=& \sum_{j=1}^{v}e_j\frac{\delta^{k_j+i}(\alpha)}{\delta^{i}(\alpha)} & \text{by  (\ref{N_k(a)}),} \\
&=& \frac{1}{\delta^{i}(\alpha)}\sum_{j=1}^{v}e_j\delta^{i+k_j}(\alpha). & 
\end{array}
\end{equation*}
Thus, $(e_1, \dots, e_v)$ is solution to \eqref{2tau-1}. Now, the coefficient matrix of \eqref{delta i s_i} is 
\[
\left( \begin{array}{ccccc}
\delta^{k_1}(\alpha) &\delta^{k_1+1}(\alpha)  && \cdots &\delta^{k_1+v-1}(\alpha)  \\
\delta^{k_2}(\alpha) & \delta^{k_2+1}(\alpha) && \cdots &  \delta^{k_2+v-1}(\alpha) \\
\vdots &\vdots && \ddots& \vdots\\
\delta^{k_v}(\alpha) & \delta^{k_v+1}(\alpha) && \cdots &  \delta^{k_v+v-1}(\alpha)
\end{array} \right)  = W_{v}(\delta_{k_1}(\alpha), \dots, \delta_{k_v}(\alpha))^{tr}
\]
which is invertible by Lemma \ref{lemma 3}. 
\end{proof}

So, in order to compute the error values vector, we only need to know the error positions $\{ k_1, \dots, k_v \}$. 

\begin{proposition}\label{locator}
A position $t \in \{k_1, \dots, k_v \}$ if, and only if, $L(\delta^t(\alpha))$ is a right root of the \emph{error locator polynomial}
$$\lambda=[x- L(\delta^{k_1}(\alpha)),\dots,x-L(\delta^{k_v}(\alpha))]_\ell.$$
\end{proposition}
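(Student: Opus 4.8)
The plan is to convert the combinatorial statement about right roots into a statement about the kernel of a differential operator, and then pin down that kernel by a dimension count. The key observation is that, for $0 \neq c \in \Fz$ and any $g = \sum_i g_i x^i \in \R$, combining the evaluation formula \eqref{root2} with the identity \eqref{N_k(a)} (in the form $N_i(L(c)) = \delta^i(c)c^{-1}$) gives
\[
g[L(c)] = \sum_i g_i N_i(L(c)) = \frac{1}{c}\sum_i g_i \delta^i(c).
\]
Hence $L(c)$ is a right root of $g$ if and only if $c$ lies in the kernel of the $K$--linear operator $\sum_i g_i \delta^i$ on $\Fz$. Applying this with $g = \lambda$ and $c = \delta^t(\alpha)$ reduces the whole proposition to the equivalence: $\delta^t(\alpha) \in \ker\bigl(\sum_i \lambda_i \delta^i\bigr)$ if and only if $t \in \{k_1,\dots,k_v\}$. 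This unified formulation will dispatch both implications at once.

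For the easy inclusion, I would observe that, by definition of the left least common multiple, $\lambda$ is a left multiple of each $x - L(\delta^{k_j}(\alpha))$, so every $L(\delta^{k_j}(\alpha))$ is a right root of $\lambda$; by the displayed equivalence this places each $\delta^{k_j}(\alpha)$ in $\ker\bigl(\sum_i \lambda_i \delta^i\bigr)$. Since the indices $k_1,\dots,k_v$ are distinct elements of $\{0,\dots,p-1\}$ and $\{\alpha,\delta(\alpha),\dots,\delta^{p-1}(\alpha)\}$ is a $K$--basis of $\Fz$, the vectors $\delta^{k_1}(\alpha),\dots,\delta^{k_v}(\alpha)$ are $K$--linearly independent and thus span a $v$--dimensional subspace contained in that kernel.

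The substantive step is to bound the kernel dimension above by $v$, and this is where I expect the only genuinely non-routine work. First I would invoke Proposition \ref{DCCdim}: because the $\delta^{k_j}(\alpha)$ are $K$--independent and $v \leq \tau \leq p-1$, the generator $\lambda$ has degree exactly $v$, so $\sum_{i=0}^{v}\lambda_i\delta^i$ is a monic operator of order $v$ (with $\lambda_v = 1$). Suppose, for contradiction, that its kernel had $K$--dimension at least $v+1$, and choose $K$--independent $b_1,\dots,b_{v+1}$ in it. Each $b_\ell$ satisfies $\delta^v(b_\ell) = -\sum_{i<v}\lambda_i\,\delta^i(b_\ell)$ with the \emph{same} coefficients $\lambda_i \in \Fz$, so the bottom row of $W_{v+1}(b_1,\dots,b_{v+1})$ is an $\Fz$--linear combination of the rows above it; this Wronskian is therefore singular, contradicting Lemma \ref{lemma 3}. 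Consequently $\ker\bigl(\sum_i \lambda_i\delta^i\bigr) = \operatorname{span}_K\{\delta^{k_1}(\alpha),\dots,\delta^{k_v}(\alpha)\}$.

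To conclude, $L(\delta^t(\alpha))$ is a right root of $\lambda$ exactly when $\delta^t(\alpha)$ lies in this span. Since $\delta^t(\alpha)$ is itself one of the basis vectors $\delta^0(\alpha),\dots,\delta^{p-1}(\alpha)$, it can belong to the span of the distinct basis vectors $\delta^{k_1}(\alpha),\dots,\delta^{k_v}(\alpha)$ only when $t$ is one of $k_1,\dots,k_v$, which is the desired equivalence. The evaluation-to-kernel translation via \eqref{N_k(a)} and the Wronskian row-dependence argument are the two ideas carrying the proof; everything else is bookkeeping with the basis property of the cyclic vector $\alpha$.
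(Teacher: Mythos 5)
Your proof is correct, but it takes a noticeably different route from the paper's. The paper disposes of the proposition in two lines, entirely inside the skew polynomial ring: Proposition \ref{DCCdim} gives $\deg\lambda=v$, and if some $x-L(\delta^t(\alpha))$ with $t\notin\{k_1,\dots,k_v\}$ were a right divisor of $\lambda$, then $\lambda$ would be a left multiple of the least left common multiple of all $v+1$ linear factors, which by the same Proposition \ref{DCCdim} has degree $v+1$ (the $v+1$ elements $\delta^{k_1}(\alpha),\dots,\delta^{k_v}(\alpha),\delta^t(\alpha)$ being distinct members of the $K$--basis $\{\alpha,\dots,\delta^{p-1}(\alpha)\}$) --- a contradiction. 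You instead translate right evaluation into the kernel of the $K$--linear operator $\sum_i\lambda_i\delta^i$ via $N_i(L(c))=\delta^i(c)c^{-1}$, and then bound that kernel's dimension by $v$ through a Wronskian row-dependence argument; in effect you re-prove, for this particular operator, the degree bound that Proposition \ref{DCCdim} already supplies for arbitrary lclm's of linear factors. Both arguments ultimately rest on Lemma \ref{lemma 3}, so the underlying mathematics is the same; what your version buys is an explicit description of the full solution space $\ker\bigl(\sum_i\lambda_i\delta^i\bigr)=\operatorname{span}_K\{\delta^{k_1}(\alpha),\dots,\delta^{k_v}(\alpha)\}$, which is slightly more information than the proposition asks for, at the cost of redoing work the paper delegates to Proposition \ref{DCCdim}. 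All individual steps check out: the evaluation-to-kernel dictionary is a correct use of \eqref{root2} and \eqref{N_k(a)}, the monicity $\lambda_v=1$ holds because the lclm is taken monic, and the hypotheses of Proposition \ref{DCCdim} ($K$--independence and $v\le\tau\le p-1$) are satisfied.
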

\begin{proof}
Proposition \ref{DCCdim} says that $deg (\lambda) = v$. Moreover, if there is some $t \not \in \{k_1, \dots, k_v\}$ such that $x-L(\delta^t(\alpha))|_r\lambda$, then, by Proposition \ref{DCCdim}, $deg(\lambda)=v+1$, a contradiction. 
\end{proof}

In view of propositions \ref{error values} and \ref{locator}, the error polynomial $e$ is known as soon as we compute the locator polynomial $\lambda$. Next, we will design an algorithm to do this calculation. 

For every pair $(i,k)$ of non-negative integers,  set  
\begin{equation}
S_{i,k}=\sum_{j=1}^v\delta^k(e_j)\delta^{i+k_j}(\alpha).
\end{equation}

\begin{lemma}\label{calcular}
For every pair $(i,k)$ of non-negative integers, we have
\begin{equation}\label{Sik}
S_{i,k+1} = \delta(S_{i,k})-S_{i+1,k}
\end{equation}
Moreover, for $0 \leq k \leq \tau-1$ and $i+k \leq 2\tau -1$, the values $S_{i,k}$ can be computed from the received polynomial $y$.
\end{lemma}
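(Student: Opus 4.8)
The plan is to treat the two assertions separately: the recurrence \eqref{Sik} is a direct consequence of the Leibniz rule, while the computability claim will then follow by induction on $k$ from that recurrence together with the syndromes.

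For the recurrence, I would simply apply $\delta$ to the defining expression of $S_{i,k}$. Since $\delta$ is additive it can be pushed through the sum, and for each summand I apply \eqref{Leibniz} with $a=\delta^k(e_j)$ and $b=\delta^{i+k_j}(\alpha)$. This yields
\[
\delta(S_{i,k}) = \sum_{j=1}^{v}\delta^{k+1}(e_j)\delta^{i+k_j}(\alpha) + \sum_{j=1}^{v}\delta^{k}(e_j)\delta^{(i+1)+k_j}(\alpha),
\]
where the two sums on the right are exactly $S_{i,k+1}$ and $S_{i+1,k}$. Rearranging gives \eqref{Sik}. This step is purely mechanical and poses no difficulty.

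For the computability claim, the base case $k=0$ is the crucial link with the received data. Here $S_{i,0}=\sum_{j=1}^{v}e_j\,\delta^{i+k_j}(\alpha)$, which by \eqref{2tau-1} equals $\delta^i(\alpha)\,s_i$. Since the syndromes $s_i$ are computed from the received $y$ through \eqref{synd} for $0\le i\le 2\tau-1$, and the scalars $\delta^i(\alpha)$ are known quantities depending only on the fixed cyclic vector $\alpha$, every $S_{i,0}$ with $i\le 2\tau-1$ is available from $y$. This settles the case $k=0$.

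I would then induct on $k$. Assuming every $S_{i,k}$ with $i+k\le 2\tau-1$ is computable from $y$, the recurrence \eqref{Sik} expresses $S_{i,k+1}$ in terms of $S_{i,k}$ and $S_{i+1,k}$, both of which are computable precisely when $i+(k+1)\le 2\tau-1$ (the binding constraint being $(i+1)+k\le 2\tau-1$). Applying $\delta$ to an element of $\Fz$ already in hand is an effective operation, so $S_{i,k+1}$ is computable in the same range, and the region $i+k\le 2\tau-1$ contains the stated range $0\le k\le\tau-1$. I expect the only subtle point to be the bookkeeping of this index window: each increment of $k$ consumes one unit of the available range in $i$, so starting from $i\le 2\tau-1$ at level $0$ one reaches exactly $i+k\le 2\tau-1$. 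No genuine obstacle arises, since $\delta$ is explicit and the syndromes furnish the initialization.
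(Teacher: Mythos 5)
Your proof is correct and follows essentially the same route as the paper's: the recurrence comes from applying the Leibniz rule termwise to $\delta(S_{i,k})$, and computability follows from the base case $S_{i,0}=s_i\delta^i(\alpha)$ (via \eqref{2tau-1}) together with induction on $k$ through \eqref{Sik}. Your explicit bookkeeping of the index window $i+(k+1)\le 2\tau-1$ is accurate and slightly more detailed than the paper's one-line remark.
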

\begin{proof}
For every $(i,k)$, 
\[
\begin{array}{rcll}S_{i,k+1}&=&\sum_{j=1}^v\delta^{k+1}(e_j)\delta^{i+k_j}(\alpha)&\\& =& \delta(\sum_{j=1}^v\delta^{k}(e_j)\delta^{i+k_j}(\alpha))   -\sum_{j=1}^v\delta^{k}(e_j)\delta^{i+1+k_j}(\alpha)& \\& =& \delta(S_{i,k})-S_{i+1,k}.& 
\end{array}
\]
If $k=0$ and $0 \leq i \leq  2\tau-1$, then, by (\ref{2tau-1}), 
\[
S_{i,0}=\sum_{j=1}^ve_j\delta^{i+k_j}(\alpha)=s_i\delta^i(\alpha).
\] 
If $k + 1 + i \leq 2\tau - 1$, $S_{i,k+1}$ is computed from \eqref{Sik}. 
\end{proof}

Recall that $\mathfrak{v} : \mathcal{R} \to \Fz^p$ denotes the coordinate isomorphism associated to the monomial basis $\{1, x, \dots, x^{p-1}\}$ of $\mathcal{R}$. From Proposition \ref{DCCdim} we know that $f=\sum_{k=0}^{p-1}f_kx^k\in \mathcal{R}\lambda$ if and only if $\mathfrak{v}(f)$ belongs to the left kernel of the matrix 

$$\Sigma=\left( \begin{array}{ccccc}
\delta^{k_1}(\alpha) &\delta^{k_2}(\alpha)  && \cdots &\delta^{k_v}(\alpha)  \\

\delta^{k_1+1}(\alpha) & \delta^{k_2+1}(\alpha) && \cdots &  \delta^{k_v+1}(\alpha) \\

\vdots &\vdots && \ddots& \vdots\\

\delta^{k_1+p-1}(\alpha) & \delta^{k_2+p-1}(\alpha) && \cdots &  \delta^{k_v+p-1}(\alpha)
\end{array} \right)
$$
\begin{proposition} \label{Prop 3} Define, for every $1 \leq r$, the matrix
\[ E^r=\left( \begin{array}{ccccc}
e_1 &\delta(e_1)  && \cdots &\delta^{r-1}(e_1)  \\
e_2 & \delta(e_2) && \cdots &  \delta^{r-1}(e_2) \\
\vdots &\vdots && \ddots& \vdots\\
e_v & \delta(e_v ) && \cdots &  \delta^{r-1}(e_v )
\end{array} \right)_{v\times r},
\]
and set 
\[
\mu = \max \{ r \leq v :  E^r \text{ is full rank}\}.
\]

If $V \subseteq \Fz^p$ is the left kernel of the matrix $\Sigma E^{\mu}$, then  $\mathfrak{v}^{-1}(V)=\mathcal{R}\rho$ for some $\rho\in \mathcal{R}$ of degree $\mu$. Moreover, $\rho$ is a right divisor of $\lambda$.
\end{proposition}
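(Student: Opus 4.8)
The plan is to translate the kernel condition defining $V$ into a statement about right evaluations, and then to recognize $\mathfrak{v}^{-1}(V)$ as a left ideal whose generator right-divides $\lambda$. First I would observe that $\Sigma$ is precisely the Wronskian matrix $W_p(\delta^{k_1}(\alpha),\dots,\delta^{k_v}(\alpha))$, so Proposition \ref{DCCdim} identifies its left kernel with $\mathfrak{v}(\mathcal{R}\lambda)$ and gives $\deg\lambda=v$. A direct computation shows that the $(i,k)$ entry of $\Sigma E^\mu$ equals $S_{i,k}$, and using \eqref{N_k(a)} together with \eqref{root2} in the form $\sum_{i=0}^{p-1}f_i\delta^i(\beta)=\beta\,f[L(\beta)]$, I would rewrite the membership $f\in\mathfrak{v}^{-1}(V)$ as $w_f E^\mu=0$, where $w_f=(\delta^{k_1}(\alpha)\,f[a_1],\dots,\delta^{k_v}(\alpha)\,f[a_v])$ and $a_j=L(\delta^{k_j}(\alpha))$. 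In this language $\mathcal{R}\lambda=\{f:w_f=0\}$, so the inclusion $\mathcal{R}\lambda\subseteq\mathfrak{v}^{-1}(V)$ is immediate.

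Next I would record a dimension count. The assignment $f\mapsto w_f$ is $\Fz$--linear with kernel $\mathcal{R}\lambda$; since right evaluation at the $a_j$ is surjective onto $\Fz^v$ (its kernel $\mathcal{R}\lambda$ has codimension $v$) and the coordinatewise scaling by the nonzero $\delta^{k_j}(\alpha)$ is invertible, the map $f\mapsto w_f$ is onto $\Fz^v$. Hence $\mathfrak{v}^{-1}(V)/\mathcal{R}\lambda$ is isomorphic to the left kernel of $E^\mu$, which has dimension $v-\mu$ because $E^\mu$ has rank $\mu$. Therefore $\dim_{\Fz}\mathfrak{v}^{-1}(V)=(p-v)+(v-\mu)=p-\mu$.

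The heart of the argument is to show that $\mathfrak{v}^{-1}(V)$ is a left ideal, that is, closed under left multiplication by $x$. Here I would use the evaluation identity $(xf)[a]=a\,f[a]+\delta(f[a])$, obtained by dividing $xf$ by $x-a$, together with $\delta^{k_j}(\alpha)\,a_j=\delta^{k_j}(\alpha)L(\delta^{k_j}(\alpha))=\delta^{k_j+1}(\alpha)$ and the Leibniz rule, to derive the clean formula $w_{xf}=\delta(w_f)$, where $\delta$ acts coordinatewise on $\Fz^v$. Thus closure under $x$ is equivalent to the left kernel of $E^\mu$ being stable under coordinatewise $\delta$. I expect this stability to be the main obstacle, and it is exactly where the maximality in the definition of $\mu$ is essential: if $wE^\mu=0$, applying $\delta$ to each relation $\sum_j w_j\delta^k(e_j)=0$ yields $\sum_j\delta(w_j)\delta^k(e_j)=0$ at once for $k\le\mu-2$, while for $k=\mu-1$ one gets $\sum_j\delta(w_j)\delta^{\mu-1}(e_j)=-\sum_j w_j\delta^{\mu}(e_j)$. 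The right-hand side vanishes because, by maximality of $\mu$, the matrix $E^{\mu+1}$ is not full rank and hence has rank $\mu$, so its last column $\delta^\mu(e)$ lies in the column span of $E^\mu$ and is therefore orthogonal to $w$ (the case $\mu=v$ being trivial, since then the left kernel of $E^\mu$ is zero).

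Finally, since every left ideal of $\mathcal{R}=R/R(x^p-\gamma x)$ is principal, generated by a monic right divisor of $x^p-\gamma x$, I would write $\mathfrak{v}^{-1}(V)=\mathcal{R}\rho$. The identity $\dim_{\Fz}(\mathcal{R}/\mathcal{R}\rho)=\deg\rho$ combined with the dimension count above forces $\deg\rho=\mu$; and from $\mathcal{R}\lambda\subseteq\mathcal{R}\rho$ we get $\lambda\in\mathcal{R}\rho$, that is, $\rho|_r\lambda$, which completes the proof.
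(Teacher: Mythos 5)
Your argument is correct, and while it rests on the same two pillars as the paper's proof --- closure of $V$ under the action of $x$, and the maximality of $\mu$ forcing the last column of $E^{\mu+1}$ into the span of the first $\mu$ --- you organize the computation quite differently. The paper works entirely in $\Fz^p$ coordinates: it writes out the shift $(a_0,\dots,a_{p-1})\mapsto(\delta(a_0),a_0+\delta(a_1)+\gamma a_{p-1},\dots)$ induced by left multiplication by $x$, expands $\sum_i(a_{i-1}+\delta(a_i))S_{i,k}$ via the recursion $S_{i,k+1}=\delta(S_{i,k})-S_{i+1,k}$, telescopes, and disposes of the wrap-around term through $S_{p,k}=\gamma S_{1,k}$. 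You instead factor everything through the evaluation map $f\mapsto w_f=\mathfrak{v}(f)\Sigma$, prove the clean identity $w_{xf}=\delta(w_f)$, and thereby reduce the whole question to the $\delta$-stability of the left kernel of $E^\mu$ inside $\Fz^v$ --- a statement about a $v\times\mu$ matrix in which $\Sigma$, the shift formula, and the relation $x^p=\gamma x$ no longer appear (the quotient is absorbed once and for all into the fact that each $L(\delta^{k_j}(\alpha))$ is a right root of $x^p-\gamma x$, so evaluation descends to $\mathcal{R}$; it would be worth saying this explicitly when you invoke $(xf)[a]=af[a]+\delta(f[a])$ for $f$ of degree $p-1$). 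Your derivation of $\deg\rho=\mu$ via surjectivity of $f\mapsto w_f$ is likewise a mild variant of the paper's rank count on $\Sigma E^\mu$. What your route buys is conceptual clarity --- the two mechanisms (the Wronskian structure of $\Sigma$ and the maximality of $\mu$) are cleanly decoupled --- at the cost of setting up the auxiliary map; the paper's direct computation is more self-contained but harder to read.
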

\begin{proof}
We need to prove that $\mathfrak{v}^{-1}(V)$ is a left ideal of $\mathcal{R}$ which, in coordinates, amounts to check that if 
\[ 
(a_0,\dots,a_{p-2},a_{p-1})\in V 
\] then 
\[ 
(\delta(a_0),a_0+\delta(a_1)+ \gamma a_{p-1},\dots,a_{p-2}+\delta(a_{p-1}))\in V.
\] 
This is due to the facts that $xa_ix^i=a_ix^{i+1}+\delta(a_i)x^i$ for all $0\leq i \leq p-1$, and $x^p = \gamma x$ in $\mathcal{R}$. 

Suppose that $(a_0,\dots,a_{p-2},a_{p-1})\Sigma E^{\mu}=0$.  The maximality of $\mu$ ensures that the last column of $E^{\mu+1}$ is a linear combination of the previous $\mu$ columns. Hence $(a_0,\dots,a_{p-2},a_{p-1})\Sigma E^{\mu+1}=0$.

Observe that \[ 
\Sigma E^{\mu +1 }=  
\left( \begin{array}{ccccc}
S_{0,0} & S_{0,1} && \cdots & S_{0,\mu} \\
S_{1,0} & S_{1,1} && \cdots & S_{1,\mu}  \\
\vdots &\vdots && \ddots& \vdots\\
S_{p-1,0} & S_{p-1,1} && \cdots &  S_{p-1,\mu}
\end{array} \right)
 \]
Therefore, 
\begin{equation} \label{sumas}
\sum_{i=0}^{p-1}a_iS_{i,k}=0, \qquad \text{ for all } 0 \leq k \leq \mu.
\end{equation}

Let $0\leq k \leq \mu-1$ and set $a_{-1}=0$. Then
$$\begin{array}{rcll}
\sum_{i=0}^{p-1}(a_{i-1}+\delta(a_i))S_{i,k} &= & \sum_{i=0}^{p-1}a_{i-1}S_{i,k}+\delta(a_i)S_{i,k}  \\
&= & \sum_{i=0}^{p-1}a_{i-1}S_{i,k}+(\delta(a_iS_{i,k})-a_i\delta(S_{i,k}))\\
&= & \sum_{i=0}^{p-1}a_{i-1}S_{i,k}+\delta( \sum_{i=0}^{p-1}a_iS_{i,k})- \sum_{i=0}^{p-1}a_i\delta(S_{i,k})\\
&= & \sum_{i=0}^{p-1}a_{i-1}S_{i,k}- \sum_{i=0}^{p-1}a_i\delta(S_{i,k})& \\
&= & \sum_{i=0}^{p-1}a_{i-1}S_{i,k}- \sum_{i=0}^{p-1}a_i[S_{i,k+1}+S_{i+1,k}] \\
&= & \sum_{i=0}^{p-1}a_{i-1}S_{i,k}- \sum_{i=0}^{p-1}a_iS_{i,k+1}-\sum_{i=0}^{p-1}a_{i}S_{i+1,k} \\
&= & \sum_{i=0}^{p-1}a_{i-1}S_{i,k}-\sum_{i=0}^{p-1}a_{i}S_{i+1,k} \\
&= & \sum_{i=0}^{p-2}a_{i}S_{i+1,k}-\sum_{i=0}^{p-1}a_{i}S_{i+1,k} \\
&= & -a_{p-1}S_{p,k}  \\
\end{array}$$

The fourth and seventh steps hold by \eqref{sumas}, while the fifth does by  \eqref{Sik}, and the last equality uses that $a_{-1}=0$.
Since $\delta^p-\gamma \delta=0$, we have that  
\[
S_{p,k}=\sum_{j=0}^{v}\delta^k(e_j)\delta^{p+k_j}(\alpha)=\sum_{j=0}^{v}\delta^k(e_j)\gamma \delta^{k_j+1}(\alpha) =\gamma\sum_{j=0}^{v}\delta^k(e_j)\delta^{k_j+1}(\alpha)=\gamma S_{1,k}.
\]
Then $\sum_{i=0}^{p-1}(a_{i-1}+\delta(a_i))S_{i,k}=-\gamma a_{p-1}S_{1,k}$. Hence, 
\[
(\delta(a_0),a_0+\delta(a_1)+\gamma a_{p-1},\dots,a_{p-2}+\delta(a_{p-1}))\Sigma E^{\mu}=0,
\]
as required.

Every left ideal of $\mathcal{R}$ is principal, so $\mathfrak{v}^{-1}(V)$ is generated by a polynomial $\rho \in \mathcal{R}$. Since $\mathfrak{v}(\mathcal{R}\lambda)$ is the kernel of the matrix $\Sigma$, it follows that $\mathcal{R}\lambda\subseteq \mathcal{R}\rho$, hence $\rho$ right divides $\lambda$. 

Finally, the dimension of $\mathcal{R}\rho$ is, on the one hand, $p - \deg \rho$, and on the other hand, since $\Sigma E^\mu$ is a full rank matrix, $p - \mu$. Hence, $\deg \rho = \mu$. 
\end{proof}

In order to compute $\rho$ we should be, in principle, able to compute $\mu$ and the matrix $\Sigma E^{\mu}$ from the received polynomial. However, even if we knew $\mu$, we could only compute some of the coefficients of $\Sigma E^{\mu}$, see Lemma \ref{calcular}.

For each $r\leq \tau$, set $S^r=\Sigma^{\tau}E^r$, where 

$$\Sigma^{\tau}=\left( \begin{array}{ccccc}
\delta^{k_1}(\alpha) &\delta^{k_2}(\alpha)  && \cdots &\delta^{k_v}(\alpha)  \\
\delta^{k_1+1}(\alpha) & \delta^{k_2+1}(\alpha) && \cdots &  \delta^{k_v+1}(\alpha) \\
\vdots &\vdots && \ddots& \vdots\\
\delta^{k_1+\tau}(\alpha) & \delta^{k_2+\tau}(\alpha) && \cdots &  \delta^{k_v+\tau}(\alpha)
\end{array} \right)_{(\tau+1)\times v}. $$

Observe that 
 \[
 S^r=\left( \begin{array}{ccccc}
S_{0,0} & S_{0,1} && \cdots & S_{0,r-1} \\
S_{1,0} & S_{1,1} && \cdots & S_{1,r-1}  \\
\vdots &\vdots && \ddots& \vdots\\
S_{\tau,0} & S_{\tau,1} && \cdots &  S_{\tau,r-1}
\end{array} \right)_{(\tau+1)\times r}
 \]
 for all $1 \leq r \leq \tau$. Indeed, $S^r$ consists of the first $r$ columns of $S^\tau$ and, by Lemma \ref{calcular}, this matrix can be computed from the received polynomial $y$.

\begin{lemma}\label{lemma 4} 
For each $r\leq \tau$, $rk\, S^r=rk\, \Sigma E^r=rk\, E^r$. Consequently, 
\begin{equation}\label{mudos}
\mu = \max \{ r : S^r \text{ has full rank}\}.
\end{equation}
\end{lemma}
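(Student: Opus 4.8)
The plan is to establish the chain of rank equalities $rk\, S^r = rk\, \Sigma E^r = rk\, E^r$ for each $r \leq \tau$, and then derive \eqref{mudos} by combining this with the definition of $\mu$ from Proposition \ref{Prop 3}. The central observation is that all three matrices $S^r = \Sigma^\tau E^r$, $\Sigma E^r$, and $E^r$ share the factor $E^r$, while the left factors $\Sigma^\tau$ and $\Sigma$ are built from columns of Wronskian type. The governing principle throughout will be Lemma \ref{lemma 3}: any $v$ rows of $\Sigma$ (or any $v$ rows of $\Sigma^\tau$, provided $\tau + 1 \geq v$) indexed by distinct exponents form a submatrix that, after transposition, is a Wronskian of $v$ elements $\delta^{k_{j}}(\alpha), \dots$, which is invertible because the $c_j = \delta^{k_j}(\alpha)$ are $K$-linearly independent (they are distinct iterates of $\delta$ applied to a cyclic vector). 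In particular both $\Sigma$ and $\Sigma^\tau$ have full column rank $v$.

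First I would prove $rk\, \Sigma E^r = rk\, E^r$. Since $E^r$ is a $v \times r$ matrix and $\Sigma$ is $p \times v$ of full column rank $v$, left multiplication by $\Sigma$ is injective on column space: if $\Sigma E^r w = 0$ for a vector $w \in \Fz^r$, then $E^r w$ lies in the kernel of $\Sigma$, which is trivial, so $E^r w = 0$. Hence $\Sigma E^r$ and $E^r$ have the same (right) kernel, and therefore the same rank. The identical argument, using that $\Sigma^\tau$ also has full column rank $v$ (here one needs $\tau + 1 \geq v$, which holds because $v \leq \tau$), gives $rk\, S^r = rk\, \Sigma^\tau E^r = rk\, E^r$. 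This yields the triple equality $rk\, S^r = rk\, \Sigma E^r = rk\, E^r$ for every $r \leq \tau$.

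For the consequence \eqref{mudos}, recall that $\mu = \max\{ r \leq v : E^r \text{ is full rank}\}$, where $E^r$ is $v \times r$, so ``full rank'' means rank $r$ (full column rank). By the rank equality just established, $E^r$ has rank $r$ if and only if $S^r$ has rank $r$, i.e. if and only if $S^r$ has full column rank. Because the columns of $E^{r}$ are the first $r$ columns of $E^{r+1}$ (and likewise for $S^r$), the ranks are nondecreasing in $r$ and the ``full rank'' property is inherited downward; thus the maximum over $r \leq v$ on which $E^r$ is full rank coincides with the maximum over all $r$ on which $S^r$ is full rank, giving $\mu = \max\{ r : S^r \text{ has full rank}\}$.

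The main obstacle I anticipate is not the rank-transfer argument, which is routine linear algebra over the field $\Fz$, but rather verifying cleanly that $\Sigma^\tau$ has full column rank, since $\Sigma^\tau$ has only $\tau + 1$ rows against $v$ columns. The point to check carefully is that $v \leq \tau$ guarantees $\tau + 1 > v$, so one may select $v$ of the $\tau + 1$ rows, observe that they form (after transposition) a Wronskian $W_v(\delta^{k_1}(\alpha), \dots, \delta^{k_v}(\alpha))$ of $K$-linearly independent elements, and invoke Lemma \ref{lemma 3} for its invertibility. I would also take care to phrase ``full rank'' consistently as full \emph{column} rank throughout, since the matrices are taller than they are wide, to avoid any ambiguity when transferring the property between $E^r$ and $S^r$.
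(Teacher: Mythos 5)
Your proof is correct and follows essentially the same route as the paper: both reduce everything to the fact that $\Sigma$ and $\Sigma^{\tau}$ have full column rank $v$ (via Lemma \ref{lemma 3} and the hypothesis $v\leq\tau$), and then transfer $rk\,E^r$ through the left factor. The only cosmetic difference is that the paper performs that transfer by citing Sylvester's rank inequality, while you argue directly that a full-column-rank left factor preserves the right kernel; these amount to the same observation.
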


\begin{proof}
By Lemma \ref{lemma 3}, $rk\, \Sigma=rk\, \Sigma^{\tau} =v$. By Sylvester's rank inequality, 
$$
\min\{rk\, \Sigma, rk\, E^r\}\geq rk\, \Sigma E^r \geq rk\, \Sigma + rk\,  E^r - v=rk\,  E^r. 
$$
Then $rk\, \Sigma E^r=rk\, E^r$. Analogously,  $rk\, S^r =rk\, E^r$. 
Finally, since $\mu \leq v \leq \tau$, we get \eqref{mudos}.
\end{proof}

Thus, $\mu$ can be computed from $S^\tau$ by virtue of Lemma \ref{lemma 3}. Our next aim is to show how to use this matrix for the computation of $\rho$. 

\begin{lemma} \label{lemma 5} For each $\mu \leq r\leq \tau$, $\mu=rk\, E^r=rk\, \Sigma E^r=rk\, S^r$. 
\end{lemma}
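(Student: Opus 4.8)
The plan is to reduce the statement to a single rank computation. By Lemma \ref{lemma 4}, for every $r \leq \tau$ one has $rk\, S^r = rk\, \Sigma E^r = rk\, E^r$; since $\mu \leq v \leq \tau$, these three equalities already hold throughout the range $\mu \leq r \leq \tau$. Thus the only genuinely new content is the identity $rk\, E^r = \mu$ for $\mu \leq r \leq \tau$, and everything reduces to understanding how $rk\, E^r$ varies with $r$.

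To this end I would set $d = \dim_K V$, where $V$ is the $K$-subspace of $\Fz$ spanned by the error values $e_1, \dots, e_v$, and prove the rank profile $rk\, E^r = \min(r,d)$. The crucial point is that $\delta$ is $K$-linear (its field of constants is exactly $K$), so any $K$-linear dependence among the $e_j$ is preserved under every power $\delta^k$. Concretely, after selecting a $K$-basis of $V$ among $\{e_1,\dots,e_v\}$, each row of $E^r$ is a $K$-linear combination of the $d$ rows indexed by that basis, with coefficients independent of the column index $k$; hence $rk\, E^r \leq d$, and trivially $rk\, E^r \leq r$ since $E^r$ has $r$ columns. For the matching lower bound I would isolate the $d$ rows singled out by the basis: their first $d$ columns form the transpose of the Wronskian $W_d$ of a $K$-linearly independent family, which is invertible by Lemma \ref{lemma 3}. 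Taking the first $\min(r,d)$ of these rows then produces an invertible $\min(r,d)\times\min(r,d)$ minor, so $rk\, E^r \geq \min(r,d)$. This yields $rk\, E^r = \min(r,d)$ for all $r \geq 1$.

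With the profile in hand the remainder is immediate. For $r \leq v$, the $v \times r$ matrix $E^r$ has full rank exactly when $rk\, E^r = r$, that is, exactly when $r \leq d$; since $d \leq v$, the definition of $\mu$ gives $\mu = \max\{r \leq v : r \leq d\} = d$. Consequently, for every $r$ with $\mu \leq r \leq \tau$ we have $r \geq \mu = d$, whence $rk\, E^r = \min(r,d) = d = \mu$. Combining this with Lemma \ref{lemma 4} delivers $\mu = rk\, E^r = rk\, \Sigma E^r = rk\, S^r$ on the whole range, as claimed.

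I expect the main obstacle to be the rank stabilization, namely showing that enlarging $E^r$ beyond $\mu$ columns cannot raise its rank. The delicate issue is not the linear algebra but the justification that the relevant dependencies among the rows carry \emph{constant} coefficients; this is precisely where identifying $\mu$ with $\dim_K V$ and exploiting the $K$-linearity of $\delta$ do the work, and it is what makes Lemma \ref{lemma 3} applicable to a genuinely $K$-independent family. A direct ``differentiate the dependence and propagate'' argument is tempting but awkward, since differentiating a column relation with coefficients in $\Fz$ introduces the unwanted terms $\delta(\beta_i)$; routing everything through the $K$-span sidesteps this complication entirely.
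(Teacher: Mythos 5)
Your proof is correct, but it takes a genuinely different route from the paper's. The paper argues by propagation: by maximality of $\mu$, the $(\mu+1)$th column of $E^r$ is an $\Fz$-linear combination $\delta^{\mu}(e_k)=\sum_{i=0}^{\mu-1}a_i\delta^{i}(e_k)$ of the first $\mu$ columns; applying $\delta$ and re-substituting this relation shows the $(\mu+2)$th column again lies in the span of the first $\mu$ columns, and iterating gives $rk\, E^r=\mu$ for all $r\geq\mu$. Note that the step you dismiss as awkward is exactly what the paper does, and your worry is unfounded: the terms $\delta(a_i)$ that appear are still scalars independent of the row index $k$, so the differentiated relation is still a legitimate column dependence with the same coefficients in every row --- no detour through constant coefficients is needed. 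Your alternative, identifying $\mu$ with $d=\dim_K\operatorname{span}_K\{e_1,\dots,e_v\}$ and establishing the full rank profile $rk\, E^r=\min(r,d)$ via the $K$-linearity of $\delta$ (for the upper bound) and invertibility of Wronskians of $K$-independent subfamilies (Lemma \ref{lemma 3}, for the lower bound), is also complete and arguably more illuminating: it yields the stabilized rank for \emph{all} $r$, not just $r\leq\tau$, and it makes transparent why the decoding-failure condition of Algorithm \ref{Algorithm0} is precisely the $K$-linear dependence of the error values, a fact the paper only extracts later in the proof of its first theorem. The paper's argument buys brevity and avoids introducing the auxiliary space $V$; yours buys a structural characterization of $\mu$.
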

\begin{proof} By Lemma \ref{lemma 4}, $rk\, S^r=rk\, \Sigma E^r = rk\, E^r$, for all $r \leq \tau$ and $rk\, E^{\mu} = \mu$. Assume that $\mu < r$. By maximality of $\mu$, the $(\mu+1)$th column of $E^r$ is a linear combination of the $\mu$ preceding columns. So, there exist $a_0,\dots,a_{\mu-1} \in \Fz$ such that, for each $1 \leq k \leq v$, 

\begin{equation}\label{eq mu}\delta^{\mu}(e_k)=\sum_{i=0}^{\mu-1}a_i\delta^{i}(e_k).
\end{equation}

Applying $\delta$, for each $1\leq k\leq v$, we have: 
$$\begin{array}{rcll}\delta^{\mu+1}(e_k)&=&\sum_{i=0}^{\mu-1}(\delta(a_i)\delta^{i}(e_k)+a_i\delta^{i+1}(e_k)) & \\
&=&\sum_{i=0}^{\mu-1}\delta(a_i)\delta^{i}(e_k)+\sum_{i=1}^{\mu}a_{i-1}\delta^{i}(e_k) & \\
&=&\sum_{i=0}^{\mu-1}\delta(a_i)\delta^{i}(e_k)+\sum_{i=1}^{\mu-1}a_{i-1}\delta^{i}(e_k) +   a_{\mu-1}\delta^{\mu}(e_k) &\\
&=&\sum_{i=0}^{\mu-1}\delta(a_i)\delta^{i}(e_k)+\sum_{i=1}^{\mu-1}a_{i-1}\delta^{i}(e_k) +   a_{\mu-1}\sum_{i=0}^{\mu-1}a_i\delta^{i}(e_k).&\\
\end{array}$$
The third step holds by (\ref{eq mu}). Therefore the $(\mu+2)$th column of $E^r$ is linear combination of the first $\mu$ columns. Repeating the process we obtain that every column from the $(\mu+1)$th to the $r$th one is a linear combination of the first $\mu$ columns, which implies that $rk\,E^r=\mu$. Since $E^r$ has $v$ rows, $\mu\leq v$. Hence, by Lemma \ref{lemma 4}, $rk\,S^r=\mu$.

\end{proof}

We are now ready for computing $\rho$.

\begin{proposition} \label{lemma 6} The reduced column echelon form of $S^\tau$ is of the form
\[
rcef(S^\tau)=\left( \begin{array}{c|c}
I_{\mu}&\\
\hhline{-|~}
a_0\cdots a_{\mu-1}& 0_{(\tau+1)\times (\tau-\mu)}\\
\hhline{-|~}
H' &\\
 \end{array}\right),
 \]
where $I_\mu$ is the $\mu \times \mu$ identity matrix and $a_0,\dots,a_{\mu-1}\in \mathbb{F}_q(z)$ are such that $\rho=x^\mu-\sum_{i=0}^{\mu-1}a_ix^i$.
\end{proposition}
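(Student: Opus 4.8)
The plan is to connect the reduced column echelon form of $S^\tau$ to the polynomial $\rho$ via the kernel characterization established in Proposition \ref{Prop 3} and the rank computations in Lemmas \ref{lemma 4} and \ref{lemma 5}. By Lemma \ref{lemma 5}, we know $rk\,S^\tau = \mu$, so $S^\tau$ has exactly $\mu$ linearly independent columns; moreover $S^\tau$ consists of the first $\tau$ columns of the infinite family $S_{i,k}$, and its first $r$ columns form $S^r$. The key structural fact driving the echelon form is that, since $rk\,S^r = \mu$ for every $\mu \le r \le \tau$ (Lemma \ref{lemma 5}), the columns indexed $\mu, \mu+1, \dots, \tau-1$ (zero-indexed) are all linear combinations of the first $\mu$ columns, which is precisely why the right block of width $\tau-\mu$ vanishes after column reduction.

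First I would pin down the pivot structure. Since the first $\mu$ columns of $S^\tau$ are linearly independent (as $rk\,S^\mu = \mu$) and every later column is dependent on them, column reduction produces pivots in the first $\mu$ columns and annihilates the remaining $\tau - \mu$ columns, giving the block $0_{(\tau+1)\times(\tau-\mu)}$ on the right. Reduced column echelon form further normalizes the pivot columns so that the leading entries form an identity pattern; here the natural claim is that the top $\mu \times \mu$ submatrix becomes $I_\mu$, which requires arguing that the top $\mu$ rows of the pivot columns are themselves linearly independent. This should follow because those top $\mu$ rows correspond to the syndrome rows $S_{0,k}, \dots, S_{\mu-1, k}$, and the relevant $\mu \times \mu$ minor is governed by the Wronskian-type matrix $\Sigma^\tau$ restricted to its first $\mu$ rows composed with $E^\mu$, both of which have rank $\mu$ by Lemmas \ref{lemma 3} and \ref{lemma 5}.

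Next I would identify the entries $a_0, \dots, a_{\mu-1}$ appearing in row $\mu+1$ of the echelon form. The crucial observation is that the $(\mu+1)$th row of $rcef(S^\tau)$, restricted to the pivot block, records exactly the coefficients of the linear dependence expressing the $(\mu)$th column on the earlier ones—equivalently, from the relation $\Sigma E^{\mu+1}$ having a kernel vector (as used in Proposition \ref{Prop 3}). By Proposition \ref{Prop 3}, the vector $\mathfrak{v}(\rho) = (-a_0, \dots, -a_{\mu-1}, 1, 0, \dots, 0)$ lies in the left kernel of $\Sigma E^\mu$, and $\rho = x^\mu - \sum_{i=0}^{\mu-1} a_i x^i$ generates $\mathfrak{v}^{-1}(V)$. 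The task is to match the $a_i$ extracted from row $\mu+1$ of the echelon form with these kernel coordinates. Writing the column-dependence relation $\sum_{i=0}^{\mu-1} a_i (\text{col}_i) = \text{col}_\mu$ and reading it against the row structure $S_{i,k}$ should produce the identification directly.

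The main obstacle I expect is the bookkeeping that ties together three different indexings: the column index $k$ (which tracks $\delta^k$ applied to error values via $E^r$), the row index $i$ (which tracks $\delta^i(\alpha)$ via $\Sigma^\tau$), and the polynomial-coefficient index in $\rho$. One must verify that the linear dependence among columns of $S^\tau$ (a statement about the $\delta^k(e_j)$) is the \emph{same} dependence as the kernel relation of $\Sigma E^\mu$ used to define $\rho$ in Proposition \ref{Prop 3}, rather than a superficially similar one; this hinges on the factorization $S^r = \Sigma^\tau E^r$ and on the fact that $\Sigma^\tau$ (restricted suitably) is injective on column space. Once that identification is secured, confirming the precise placement of $I_\mu$, the coefficient row, and the residual block $H'$ in the echelon form is a routine matter of Gaussian elimination accounting.
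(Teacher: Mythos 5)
Your overall strategy matches the paper's: use Lemma \ref{lemma 5} to kill the last $\tau-\mu$ columns, and use the left-kernel characterization of $\mathfrak{v}(\mathcal{R}\rho)$ from Proposition \ref{Prop 3} to read the coefficients of $\rho$ off the echelon form. But the decisive step --- that the $\mu$ pivots land in the \emph{first} $\mu$ rows, so that $I_\mu$ sits on top and the coefficient row is row $\mu+1$ --- is not correctly justified. You argue that the top $\mu\times\mu$ block of $S^\mu$ is invertible because it is the product of the first $\mu$ rows of $\Sigma^\tau$ (rank $\mu$) with $E^\mu$ (rank $\mu$). That inference is false in general: for a $\mu\times v$ matrix of rank $\mu$ times a $v\times\mu$ matrix of rank $\mu$ with $v>\mu$, Sylvester's inequality only guarantees rank at least $2\mu-v$, since the row space of the first factor may meet the left kernel of the second nontrivially. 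The statement is true, but for a different reason: if $u=(u_0,\dots,u_{\mu-1})$ annihilated that block on the left, then $(u_0,\dots,u_{\mu-1},0,\dots,0)$ would lie in $\ker(\cdot\,\Sigma E^\mu)=\mathfrak{v}(\mathcal{R}\rho)$ and would correspond to a nonzero element of $\mathcal{R}\rho$ of degree strictly less than $\mu=\deg\rho$, which is impossible because $\deg\rho$ is minimal in $\mathcal{R}\rho$. This degree-minimality argument (applied in the paper to the unique pivot-free row of $rcef(S_0^\mu)$, forcing it to be the last row) is exactly the missing ingredient in your proposal.

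A secondary point: you describe $a_0,\dots,a_{\mu-1}$ as the coefficients of a \emph{column} dependence ($\sum_i a_i\,\mathrm{col}_i=\mathrm{col}_\mu$). They are not; the column dependencies of $S^\tau$ are governed by $E^r$ and are what produce the zero block on the right. The $a_i$ encode a \emph{row} dependence: $(-a_0,\dots,-a_{\mu-1},1)$ spans the left kernel of $S_0^\mu$, and extending by zeros gives $\mathfrak{v}(\rho)$ in the left kernel of $\Sigma E^\mu$. You do state the correct left-kernel fact in the following sentence, so this is repairable, but as written the identification of the $a_i$ with the coefficients of $\rho$ rests on the wrong dependence relation, and it also needs the bridge (supplied in the paper via the equality $rk\,S^\mu=rk\,\Sigma E^\mu=\mu$) between the truncated matrix $S^\mu$ and the full matrix $\Sigma E^\mu$ whose left kernel Proposition \ref{Prop 3} actually describes.
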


\begin{proof} By Lemma \ref{lemma 5}, $rk\,S^{\tau}=\mu=rk\,S^\mu$ which implies, since $S^{\mu}$ consists of the first $\mu$ columns of $S^\tau$,  that
$$
rcef(S^{\tau})=\left( \begin{array}{c|c}
rcef(S^{\mu})&0_{(\tau+1)\times (\tau-\mu)}
 \end{array}\right).$$

Since $S^\mu$ consists of the first $\tau+1$ rows of $\Sigma E^{\mu}$ and both have the same rank $\mu$, we get that $rcef(S^\mu)$ is composed by the first $\tau+1$ rows of $rcef(\Sigma E^\mu)$. By Proposition \ref{Prop 3}, $\mathfrak{v}(\mathcal{R}\rho)$ is the left kernel of the matrix $rcef(\Sigma E^\mu)$. A non zero vector in the left kernel of the matrix 
\begin{equation}\label{eq system}
\left(\begin{array}{l|r}
\begin{array}{c}rcef(\Sigma E^\mu) \end{array}
&\begin{array}{c}0 \\
 \hline I_{p -(\mu+1)} \end{array}
 \end{array}\right)
 \end{equation}
is a non zero element of $\mathfrak{v}(\mathcal{R}\rho)$ whose $p-(\mu+1)$ last coordinates are zero. Since $\rho$ has degree $\mu$, and its degree is minimal among the nonzero elements of $\mathcal{R}\rho$, we have that $\mathfrak{v}(\rho)$ is the unique element, up to scalar multiplication, of the left kernel of the matrix \eqref{eq system}. 
 
 Let us consider $S_0^\mu$ formed by the first $\mu+1$ rows of $S^\mu$. Therefore 

$$rcef(S_0^\mu)=\left( \begin{array}{c}
I_{\mu}\\
\hline
H' \\
 \end{array}\right).$$
Further column reductions using the identity matrix in the left block of the matrix in \eqref{eq system}, allow us to derive that $\rho$ is also the unique non zero solution, up to scalar multiplication, of the homogeneous system

\begin{equation}\label{eq matrix}
X\left( \begin{array}{c|c}

 rcef( S_0^\mu) & 0 \\
\hline
0 &
 I_{p -(\mu+1)}\\

 \end{array}\right)=0.
 \end{equation}

The size of $rcef( S_0^\mu)$ is $(\mu+1)\times \mu$. Moreover, $rk\,S_0^\mu=\mu$ because the space solutions of (\ref{eq matrix}) has dimension $1$. Then there is only one row of $rcef( S_0^\mu)$ without a pivot. If this row is not the last one, then there exists a non zero polynomial in $\mathcal{R}\rho$ of degree strictly below $\mu$, which is a contradiction. Therefore

$$rcef(S_0^\mu)=\left( \begin{array}{c}
I_{\mu}\\
\hline
a_0\cdots a_{\mu-1} \\
 \end{array}\right).$$
 
 Finally, $(-a_0,\dots, -a_{\mu-1},1,0,\dots,0)$ is a non zero solution of (\ref{eq matrix}). Hence $\rho=x^\mu-\sum_{i=0}^{\mu-1}a_ix^i$.
\end{proof}

Our next task is to compute the locator polynomial $\lambda$ from $\rho$. We need some auxiliary results. 

\begin{lemma} \label{Lemma 6} If $\mathfrak{v}(\mathcal{R}\rho)$ is the left kernel of a matrix $H \in \mathcal{M}_{p \times \mu}(\Fz)$, then $H=\Sigma B$ for some full rank matrix $B\in \mathcal{M}_{v \times \mu}(\mathbb{F}_q(z))$ which has no zero row. 
\end{lemma}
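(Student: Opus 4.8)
The plan is to compare the given matrix $H$ with the explicit matrix $\Sigma E^{\mu}$ produced in Proposition \ref{Prop 3}, and to transport the obvious factorization $\Sigma E^{\mu} = \Sigma \cdot E^{\mu}$ over to $H$ through an invertible change of its $\mu$ columns. The point is that $E^{\mu}$ visibly has the desired shape, and everything else is a column-space identification.

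First I would pin down the column spaces. By hypothesis the left kernel of $H$ is $\mathfrak{v}(\mathcal{R}\rho)$, and by Proposition \ref{Prop 3} the left kernel of $\Sigma E^{\mu}$ is the same subspace $\mathfrak{v}(\mathcal{R}\rho)$. Since $\deg \rho = \mu$, this common kernel has dimension $p-\mu$, so both $H$ and $\Sigma E^{\mu}$, each of size $p \times \mu$, have full column rank $\mu$. Over the field $\Fz$ the left kernel of a matrix is exactly the orthogonal complement of its column space for the standard nondegenerate bilinear form on $\Fz^{p}$; as this orthogonality is an involution on subspaces, equal left kernels force equal column spaces. Two $p \times \mu$ matrices of rank $\mu$ with the same column space have columns forming two bases of that space, hence differ by an invertible matrix acting on the right: there is an invertible $P \in \mathcal{M}_{\mu}(\Fz)$ with
\[
H = (\Sigma E^{\mu})P = \Sigma(E^{\mu}P).
\]

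I would then set $B = E^{\mu}P \in \mathcal{M}_{v \times \mu}(\Fz)$, so that $H = \Sigma B$. As $P$ is invertible and $E^{\mu}$ has rank $\mu$ by the definition of $\mu$, the matrix $B$ again has rank $\mu$ and is therefore full rank. For the remaining \emph{no zero row} claim the invertibility of $P$ is decisive: the $j$-th row of $B=E^{\mu}P$ is the $j$-th row $(e_{j}, \delta(e_{j}), \dots, \delta^{\mu-1}(e_{j}))$ of $E^{\mu}$ multiplied by $P$, so it vanishes precisely when that row of $E^{\mu}$ vanishes. Since $\mu \geq 1$ and every error value $e_{j}$ is nonzero, no row of $E^{\mu}$ is zero, whence $B$ has no zero row.

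I expect the genuine obstacle to be the \emph{no zero row} assertion, since it does not follow from the column-space bookkeeping alone: an unconstrained factorization $H=\Sigma B$ with full-rank $B$ could in principle have a vanishing row. The device that defeats this is to route the factorization through the concrete matrix $\Sigma E^{\mu}$, whose right factor $E^{\mu}$ manifestly has nonzero rows because the $e_{j}$ are nonzero, and then to observe that the invertible $P$ preserves this property. A smaller point meriting care is the duality between left kernels and column spaces over $\Fz$, where one uses that the standard bilinear form is nondegenerate over any field, so that equality of left kernels is equivalent to equality of column spaces for matrices of equal rank.
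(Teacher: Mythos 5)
Your proof is correct and follows essentially the same route as the paper: both arguments route the factorization through the matrix $\Sigma E^{\mu}$ of Proposition \ref{Prop 3}, deduce from the equality of left kernels that $H=\Sigma E^{\mu}P$ for an invertible $P$, and conclude that $B=E^{\mu}P$ inherits full rank and the absence of zero rows from $E^{\mu}$. The only (harmless) difference is that the paper first produces an abstract full-rank $B$ with $H=\Sigma B$ from the inclusion $\mathfrak{v}(\mathcal{R}\lambda)\subseteq\mathfrak{v}(\mathcal{R}\rho)$ and then identifies it with $E^{\mu}P$, whereas you define $B:=E^{\mu}P$ directly.
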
 

\begin{proof} 
According to Proposition \ref{Prop 3}, $\mathfrak{v}(\mathcal{R}\lambda) \subseteq \mathfrak{v}(\mathcal{R}\rho) \subseteq \Fz^p$. Moreover, the dimension over $\Fz$ of the first subspace is $p-v$, while that of the second one is $p-\mu$. Basic Linear Algebra  tells us that there exists a full rank $(v \times \mu)$-matrix $B$ such that $H = \Sigma B$. 
By Proposition \ref{Prop 3}, $\mathfrak{v}(\mathcal{R}\rho)$ is also the left kernel of $\Sigma E^\mu$. Then there exists a non singular $(\mu \times \mu)$-matrix $P$ such that $\Sigma E^\mu P =H = \Sigma B$. Since $\Sigma $ is full rank, $E^\mu P = B$. Thus, $B$ is obtained from $E^{\mu}$ by elementary operations on its columns. Since $E^{\mu}$ has no zero row, we get the same property for $B$. 
\end{proof}

At this stage, it is convenient to introduce some notation. For a non empty $T \subseteq \{0, \dots, p-1\}$ let us denote
\[
g_T = [\{x-L(\delta^j(\alpha))\}_{j\in T}]_\ell. 
\]
The polynomials of this kind will be said to be \emph{fully $\alpha$--decomposable}. Since the least  left  common multiple corresponds to the intersection of left ideals, we get that $[g_{T_1},g_{T_2}]_\ell   = g_{T_1 \cup T_2}$ for all $T_1, T_2 \subseteq \{0, \dots, p-1\}$. On the other hand, $g_{T_1 \cap T_2}$ is clearly a common right divisor of $g_{T_1}$ and $g_{T_2}$. By Proposition \ref{DCCdim}, 
 \begin{multline}
 \# T_1 \cap T_2 = \deg (g_{T_1 \cap T_2}) \leq \deg (g_{T_1},g_{T_2})_r  = \\ \deg g_{T_1} + \deg g_{T_2} - \deg [g_{T_1},g_{T_2}]_\ell = \# T_1 + \# T_2 - \# T_1 \cup T_2. 
\end{multline}
Hence, $(g_{T_1}, g_{T_2})_r = g_{T_1 \cap T_2}$ is a fully $\alpha$-decomposable polynomial.

Given a matrix $A$ with coefficients in $\mathbb{F}_q(z)$, by $\ker (\cdot A)$ we denote its left kernel.  

\begin{proposition} \label{prop fully decomposable left} Let $\lambda'\in \mathcal{R}$ be a fully $\alpha-$decomposable polynomial which is a left multiple of $\rho$. Then $\lambda|_r\lambda'$.
\end{proposition}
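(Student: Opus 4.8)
The plan is to reduce the conclusion to a combinatorial containment of index sets and then read that containment off from the first column of $E^{\mu}$. Write $\lambda = g_S$ with $S = \{k_1, \dots, k_v\}$, and, since $\lambda'$ is fully $\alpha$-decomposable, $\lambda' = g_T$ for some $T \subseteq \{0, \dots, p-1\}$ (we may assume $\# T \le p-1$, since otherwise $g_T = x^p-\gamma x = 0$ in $\mathcal{R}$ and the claim is trivial). As $[g_S, g_T]_\ell = g_{S \cup T}$, the relation $\lambda |_r \lambda'$, that is $g_S |_r g_T$, is equivalent to $g_{S \cup T} = g_T$, hence to $S \subseteq T$. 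So it suffices to prove $S \subseteq T$ starting from the hypothesis $\rho |_r \lambda'$.

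First I would pass to linear algebra through $\mathfrak{v}$. For $j \in \{0, \dots, p-1\}$ let $w_j$ denote the $j$-th column of $W_p(\alpha, \delta(\alpha), \dots, \delta^{p-1}(\alpha))$, so that, by Proposition \ref{DCCdim}, $\mathfrak{v}(\mathcal{R}g_T) = \ker(\cdot\, \Sigma_T)$, where $\Sigma_T$ is the $p \times \# T$ matrix with columns $\{w_j\}_{j \in T}$; in particular $\Sigma = \Sigma_S$. Since $\alpha$ is a cyclic vector, $\{\alpha, \dots, \delta^{p-1}(\alpha)\}$ is a $K$-basis, so $W_p(\alpha, \dots, \delta^{p-1}(\alpha))$ is invertible by Lemma \ref{lemma 3}, and hence $w_0, \dots, w_{p-1}$ are linearly independent over $\Fz$. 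The hypothesis $\rho |_r \lambda' = g_T$ means $\mathcal{R}g_T \subseteq \mathcal{R}\rho$, which by Proposition \ref{Prop 3} becomes $\ker(\cdot\, \Sigma_T) \subseteq \ker(\cdot\, \Sigma E^{\mu})$, equivalently that the column space of $\Sigma E^{\mu}$ is contained in the $\Fz$-span $\langle w_j : j \in T \rangle$ of the columns of $\Sigma_T$.

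The decisive step is to extract $S \subseteq T$ from this inclusion. The first column of $E^{\mu}$ is the error-values vector $(e_1, \dots, e_v)^{tr}$, so the first column of $\Sigma E^{\mu}$ is $\Sigma (e_1, \dots, e_v)^{tr} = \sum_{i=1}^{v} e_i w_{k_i}$, and this vector lies in $\langle w_j : j \in T \rangle$. Because $w_0, \dots, w_{p-1}$ are linearly independent and the $k_i$ are distinct, every $e_i$ with $k_i \notin T$ is forced to vanish; as each error value $e_i$ is nonzero, we conclude $k_i \in T$ for all $i$, that is $S \subseteq T$, which is exactly what we need.

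The main obstacle is resisting a degree argument, which is tempting but insufficient. Since $\rho$ right-divides both $\lambda = g_S$ and $g_T$, it right-divides their greatest common right divisor $(g_S, g_T)_r = g_{S \cap T}$, whence $\mu = \deg \rho \le \# (S \cap T)$; but this does not yield $\# (S \cap T) = v$, because $\mu = \dim_K \langle e_1, \dots, e_v \rangle$ can be strictly smaller than $v$ when the error values are $K$-linearly dependent. The resolution is to compare the matrices column by column rather than by rank: the first column of $E^{\mu}$ records the raw, nonzero error values, and it is precisely the absence of zero rows in any factorization of a parity-check matrix of $\mathcal{R}\rho$ through $\Sigma$ (Lemma \ref{Lemma 6}) that localizes the error positions $S$ inside $T$.
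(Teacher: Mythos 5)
Your proof is correct, but it takes a genuinely different route from the paper's. The paper first replaces $\lambda'$ by the greatest common right divisor $(\lambda,\lambda')_r = g_T$ with $T\subseteq\{k_1,\dots,k_v\}$, and then argues by contradiction: the chain $\mathfrak{v}(\mathcal{R}\lambda)\subseteq\mathfrak{v}(\mathcal{R}g_T)\subseteq\mathfrak{v}(\mathcal{R}\rho)$ yields factorizations of parity-check matrices through $\Sigma$, Lemma \ref{Lemma 6} forces the cofactor $Q$ to have no zero row, and a block decomposition of $\Sigma$ and $Q$ then produces a linear dependence among the columns of $\Sigma$ if $\#T<v$, contradicting Lemma \ref{lemma 3}. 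You instead keep $T$ as the support of $\lambda'$ itself, dualize the kernel containment $\ker(\cdot\,\Sigma_T)\subseteq\ker(\cdot\,\Sigma E^{\mu})$ into a column-space containment, and read off $S\subseteq T$ from the single vector $\sum_i e_i w_{k_i}$ (the first column of $\Sigma E^{\mu}$), using that all $e_i\neq 0$ and that the $p$ columns of the full Wronskian of the cyclic vector $\alpha$ are $\Fz$-independent. This bypasses both the gcd reduction and Lemma \ref{Lemma 6} entirely and makes visible exactly which datum drives the result, namely the nonvanishing of the error values in the first column of $E^{\mu}$; the paper's Lemma \ref{Lemma 6} establishes the stronger fact that \emph{every} parity-check matrix of $\mathcal{R}\rho$ factors through $\Sigma$ with a no-zero-row cofactor, which is not needed at this level of generality. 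Your handling of the degenerate case $\#T=p$ (where $\lambda'=0$ in $\mathcal{R}$) is a sensible addition, and your closing remark correctly diagnoses why a pure degree or rank count cannot work when $\mu<v$; only the attribution of your key step to Lemma \ref{Lemma 6} is slightly off, since your argument uses the explicit first column of $E^{\mu}$ rather than that lemma.
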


\begin{proof} 

By the discussion above, we can write $(\lambda,\lambda')_r = g_T$ for some $T \subseteq \{k_1,\dots, k_v\}$.  Set $t = \#T$ and relabel the elements of $\{k_1,\dots,k_v\}$ in such a way that $T = \{k_1, \dots, k_t\}$. It suffices if we prove that $g_T = \lambda$, that is, $t = v$.

We know that $\mathfrak{v}(\mathcal{R}\lambda) = \ker (\cdot \Sigma)$ and $rk\, \Sigma = v$.  Since $\mathfrak{v}(\mathcal{R}\lambda)\subseteq \mathfrak{v}(\mathcal{R}g_T)$, we have that $\mathfrak{v}({\mathcal{R}g_T}) = \ker (\cdot \Sigma Q)$ for some full rank matrix $Q$. Moreover, $\dim_{\Fz} \mathcal{R}g_T = p - \deg g_T = p - t$, by Proposition \ref{DCCdim}. Hence, $rk\, \Sigma Q = t $. 

Analogously, since $\mathcal{R}g_T\subseteq \mathcal{R}\rho$, there exists a full rank matrix $Q'$ such that $\mathfrak{v}(\mathcal{R}\rho)$ is the left kernel of $\Sigma Q Q'$. By Lemma \ref{Lemma 6},  $\Sigma Q Q'=\Sigma B$, where $B$ has full rank and no zero row. Hence, $QQ'=B$, because $\Sigma$ defines a surjective linear map, and thus, $Q$ has no zero row.

Assume $t < v$. We can split the matrices $\Sigma$ and $Q$ as 
\[
\Sigma =\left(\Sigma_0\mid \Sigma_1\right)\text{ and } Q=\left( \begin{array}{c}
Q_0\\
\hline
Q_1 \\
 \end{array}\right),
 \]
 where $\Sigma_0$ encompasses the first $t$ columns of $\Sigma$ and $Q_0$ the first $t$ rows of $Q$. Let $c_1, \dots, c_v$ denote the columns of $\Sigma$. Now, if $j \leq t$, then $c_j$ depends linearly from the columns of $\Sigma Q$, because $\ker (\cdot \Sigma Q) \subseteq \ker (\cdot c_j)$. Therefore,
 \[
 t = rk\, (\Sigma Q \mid \Sigma_0) = rk\, (\Sigma_0 Q_0 + \Sigma_1 Q_1\mid \Sigma_0) = rk\, (\Sigma_1 Q_1 \mid \Sigma_0).
 \]
But $t = rk\, \Sigma_0$, hence every column of $\Sigma_1 Q_1$ depends linearly from the columns of $\Sigma_0$. Let $(b_{t+1}, \dots, b_v)^{tr}$ be a column of $Q_1$ with $b_v \neq 0$. We have that
\[
\Sigma_1 (b_{t+1},\dots,b_v)^{tr} = \sum_{i=t+1}^vc_ib_i
\]
depends linearly from the columns of $\Sigma_0$. This would prove that the columns of $\Sigma$ are linearly dependent, in contradiction with Lemma \ref{lemma 3}. 
\end{proof}

Proposition \ref{prop fully decomposable left} implies that $\rho = \lambda$ in most cases, and this leads to a decoding algorithm (see Algorithm \ref{Algorithm0}).  To this end, set
\begin{equation}\label{N}
N=\left( \begin{array}{cccc}
N_{0}(L(\alpha)) & N_{0}(L(\delta(\alpha))) & \cdots &N_{0}(L(\delta^{p-1}(\alpha)))  \\
N_{1}(L(\alpha))&N_{1}(L(\delta(\alpha)))& \cdots & N_{1}(L(\delta^{p-1}(\alpha)))  \\
\vdots &\vdots & \ddots& \vdots\\

N_{p-1}(L(\alpha))& N_{p-1}(L(\delta(\alpha))) & \cdots &  N_{p-1}(L(\delta^{p-1}(\alpha)))
\end{array} \right),
\end{equation}

and observe that, by virtue of \eqref{root2}, for every $f \in \mathcal{R}$ we have
\begin{equation}\label{fevaluacion}
\mathfrak{v}(f)N = (f[\alpha], f[\delta(\alpha)], \dots, f[\delta^{p-1}(\alpha)]).
\end{equation}

\begin{algorithm}[h]
\KwIn{A received transmission $y=(y_0,\dots,y_{p-1})\in \Fz^p$ with no more than $\tau$ errors.}
\KwOut{The error $e=(e_0,\dots,e_{p-1})$ such that $y-e\in C$.}

\nl \textbf{for } $0\leq i\leq 2\tau-1$ \textbf{do}\\
\nl   $s_i\gets \sum_{j=0}^{p-1}y_jN_j(L(\delta^i(\alpha)))$\\
 \nl \textbf{if  } $s_i=0$ \text{ for all }  $0\leq i\leq 2\tau-1$ \textbf{then }\\
 \nl \textbf{return } 0. \\
 \nl \label{line 5} Get $S^{\tau}$ by recursion from $y$\\
 \nl Compute\[rcef(S^\tau)=\left( \begin{array}{c|c}
I_{\mu}&\\
\hhline{-|~}
a_0\cdots a_{\mu-1}& 0_{(\tau+1)\times (\tau-\mu)}\\
\hhline{-|~}
H' &\\
 \end{array}\right).\] \\
 \nl \label{line 7} $\rho=(\rho_0,\dots,\rho_{\mu})\gets(-a_0,\dots,-a_{\mu-1},1)$ and $\rho_{N}\gets (\rho_0,\dots,\rho_{\mu},0,\dots,0)N$ \\
\nl \label{line 8} $\{k_1,\dots,k_v\}\gets \text{ zero coordinates  of }\rho_{N}$ \\ 
\nl \textbf{if  } $\mu \neq v$ \textbf{then }\\
\textbf{return} \emph{decoding failure}\\
\textbf{end} \\
\nl \label{line 11} Find $(x_1,\dots,x_v)$ such that $(x_1,\dots,x_v)(\Sigma^{v-1})^{tr}=(\alpha s_0,\delta(\alpha) s_1,\dots, \delta^{v-1}(\alpha) s_{v-1})$\\
\nl \textbf{return} $(e_0,\dots,e_{p-1})$ with $e_i=x_i$ for $i\in \{k_1,\dots,k_v\}$, and zero otherwise. 
    \caption{{PGZ decoding algorithm with unlikely decoding failure} 
\label{Algorithm0}}

\end{algorithm}

\begin{theorem}
Assume that the error vector $e$ has $v$ non-zero positions. If $v \leq \tau$, then Algorithm \ref{Algorithm0} correctly finds the error vector $e$ unless its components $e_{k_1}, \dots, e_{k_v}$ are linearly dependent over $K = \mathbb{F}_q(z^p)$. 
\end{theorem}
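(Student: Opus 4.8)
The plan is to run through Algorithm \ref{Algorithm0} instruction by instruction under the hypothesis that the nonzero error values $e_1,\dots,e_v$, located in positions $k_1,\dots,k_v$, are linearly independent over $K$, certifying that each step returns the object it is designed to compute. The pivot of the whole argument is the identification $E^{v}=\transpose{W_v(e_1,\dots,e_v)}$, which ties the rank bookkeeping of the matrices $E^{r}$ to Lemma \ref{lemma 3}.

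First I would handle the vanishing-syndrome test. If $v=0$ every syndrome is zero and the algorithm returns $0$, which is correct. Conversely, when $v\ge 1$ and the $e_j$ are $K$-independent the syndromes cannot all vanish: if $s_0=\dots=s_{v-1}=0$, then \eqref{2tau-1} places $(e_1,\dots,e_v)$ in the left kernel of $\transpose{(\Sigma^{v-1})}=\transpose{W_v(\delta^{k_1}(\alpha),\dots,\delta^{k_v}(\alpha))}$, which is invertible by Lemma \ref{lemma 3}, forcing $e=0$. Since $v-1\le\tau-1\le 2\tau-1$, the test fails and the algorithm builds $S^{\tau}$, a legitimate step because Lemma \ref{calcular} supplies every entry $S_{i,k}$ with $0\le k\le\tau-1$ and $i+k\le 2\tau-1$.

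The crux is the equality $\rho=\lambda$. By Lemma \ref{lemma 3} the matrix $E^{v}=\transpose{W_v(e_1,\dots,e_v)}$ is invertible; since each $E^{r}$ with $r\le v$ consists of $r$ columns of $E^{v}$, every such $E^{r}$ is full rank, whence $\mu=\max\{r\le v: E^{r}\text{ full rank}\}=v$. Proposition \ref{Prop 3} then gives $\rho|_r\lambda$ with $\deg\rho=\mu=v$, while $\deg\lambda=v$ by Proposition \ref{DCCdim}. As $\mathfrak{v}(\mathcal{R}\lambda)=\ker(\cdot\Sigma)$ and $\mathfrak{v}(\mathcal{R}\rho)$ both have $\Fz$-dimension $p-v$ and $\mathcal{R}\lambda\subseteq\mathcal{R}\rho$, the two left ideals coincide; being generated by the monic polynomials $\lambda$ and $\rho$ of equal degree $v$, this forces $\rho=\lambda$. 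Proposition \ref{lemma 6} guarantees that the reduced column echelon form of $S^{\tau}$ produces exactly this $\rho$, and by \eqref{fevaluacion} and \eqref{N} the coordinates of $\rho_N=\mathfrak{v}(\rho)N$ are the right evaluations of $\rho$ at the points $L(\delta^{t}(\alpha))$, vanishing, by Proposition \ref{locator}, precisely for $t\in\{k_1,\dots,k_v\}$. Thus the root-finding step recovers the $v$ true positions, the guard $\mu\neq v$ does not fire, and the concluding linear solve is the system \eqref{delta i s_i}, whose matrix $\transpose{(\Sigma^{v-1})}$ is invertible, so by Proposition \ref{error values} its unique solution is $(e_1,\dots,e_v)$; placing these values at the recovered positions outputs $e$.

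I expect the only genuine obstacle to be the passage from independence to $\rho=\lambda$, namely recognizing that $\mu$ equals the $K$-rank of $\{e_1,\dots,e_v\}$ and that the coincidence of left ideals pins down $\rho=\lambda$; the remaining steps are then forced by the cited results. To delimit the statement I would close with the complementary case: if the $e_j$ are $K$-dependent, then $E^{v}$ is singular, so $\mu<v$ and hence $\rho\neq\lambda$. Proposition \ref{prop fully decomposable left} keeps $\rho$ from being fully $\alpha$-decomposable, for otherwise $\lambda|_r\rho$ would combine with $\rho|_r\lambda$ to give $\rho=\lambda$; consequently the number of right roots of $\rho$ among the $L(\delta^{t}(\alpha))$ is strictly smaller than $\deg\rho=\mu$, the guard $\mu\neq v$ fires, and the algorithm signals a decoding failure rather than returning an incorrect vector.
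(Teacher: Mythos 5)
Your proposal is correct, and it reaches the conclusion by a route that differs from the paper's at the decisive step. The paper conditions on the quantity actually observed in Line \ref{line 8}: writing $u$ for the number of zero coordinates of $\rho_N$, it argues that if $u=\mu$ then $\rho$ is fully $\alpha$-decomposable, so Proposition \ref{prop fully decomposable left} (applied with $\lambda'=\rho$) gives $\lambda|_r\rho$ and hence $\rho=\lambda$ and a correct output; while if $u<\mu$ then $\rho\neq\lambda$, so $\mu<v$, $E^v$ is singular, and Lemma \ref{lemma 3} forces $e_{k_1},\dots,e_{k_v}$ to be $K$-dependent. You instead condition on the hypothesis: the identification $E^v=W_v(e_{k_1},\dots,e_{k_v})^{tr}$ together with Lemma \ref{lemma 3} yields $\mu=v$ under independence, and then $\rho=\lambda$ drops out of a pure degree count ($\rho|_r\lambda$, both monic of degree $v$), with Proposition \ref{prop fully decomposable left} relegated to the complementary branch, where it shows that a non-fully-$\alpha$-decomposable $\rho$ has fewer than $\mu$ right roots among the $L(\delta^t(\alpha))$ and so the guard fires. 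The two arguments are contrapositives of one another and both are sound; yours makes the role of the Wronskian of the error values explicit and avoids Proposition \ref{prop fully decomposable left} in the success branch, whereas the paper's version is shorter because it never needs to argue that dependence of the error values actually triggers the failure output. You also spell out the zero-syndrome early exit and the invertibility of the final linear system, which the paper leaves implicit.
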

\begin{proof}
After the initial settings, Line \ref{line 5} computes $S^\tau$, by virtue of  Lemma \ref{calcular} and, hence,  Line \ref{line 7} computes a right divisor  $\rho=\Sigma_{i=0}^\mu\rho_ix^i$ of the  error locator, by Proposition \ref{Prop 3} and Lemma \ref{lemma 6}. Line \ref{line 8} computes correctly, according to \eqref{fevaluacion},  indices $\{k_1, \dots, k_u \}$ such that $\delta^{k_i}(\alpha)$ is a right root of $\rho$.  If $u = \mu$, then $\rho$ is fully $\alpha$--decomposable and, by Proposition \ref{prop fully decomposable left}, $\rho = \lambda$ (and $u = v$). By Proposition \ref{error values}, Line \ref{line 11} computes the error values.  

If $u < \mu$, then $\dim (\mathcal{R}\lambda) < \dim (\mathcal{R}\rho)$, so $v < \mu$. Proposition \ref{Prop 3}  tells us that $E^v$ is a singular matrix. By Lemma \ref{lemma 3}, $\{e_{k_1}, \dots, e_{k_v}\}$ is $K$--linearly dependent. 
\end{proof}

Although the condition that leads to the decoding failure output in Algorithm \ref{Algorithm0} rarely occurs for a random error vector, we will show that it is possible to design an improved version which does always compute the error vector $e$. To this end, consider, for $f \in \mathcal{R}$ of degree $m$,  the matrix
\[
M_f = \left( \begin{array}{c} 
\mathfrak{v}(f) \\
\mathfrak{v}(xf) \\
\vdots \\
\mathfrak{v}(x^{p - 1 - m}f)
\end{array} \right), 
\]
to be used in Algorithm \ref{Algorithm}. Observe that $f, xf, \dots, x^{p-1-m}f$ are polynomials of different degrees $m, \dots, p-1$, so they are $\Fz$--linearly independent in $\mathcal{R}$. Since the dimension of $\mathcal{R}f$ is $p-m$, we get that they are a basis and, hence, the rows of $M_f$ give a basis of $\mathfrak{v}(\mathcal{R}f)$.

\begin{algorithm}[h]
\KwIn{A received transmission $y=(y_0,\dots,y_{p-1})\in \Fz^p$ with no more than $\tau $ errors.}
\KwOut{The error $e=(e_0,\dots,e_{p-1})$ such that $y-e\in C$.}

\nl \textbf{for } $0\leq i\leq 2\tau-1$ \textbf{do}\\
\nl   $s_i\gets \sum_{j=0}^{p-1}y_jN_j(L(\delta^i(\alpha)))$\\
 \nl \textbf{if  } $s_i=0$ \text{ for all }  $0\leq i\leq 2\tau-1$ \textbf{then }\\
 \nl \textbf{return } 0. \\
 \nl Get $S^{\tau}$ by recursion from $y$\\
 \nl Compute\[rcef(S^\tau)=\left( \begin{array}{c|c}
I_{\mu}&\\
\hhline{-|~}
a_0\cdots a_{\mu-1}& 0_{(\tau+1)\times (\tau-\mu)}\\
\hhline{-|~}
H' &\\
 \end{array}\right).\] \\
 \nl \label{line VII} $\rho=(\rho_0,\dots,\rho_{\mu})\gets(-a_0,\dots,-a_{\mu-1},1)$ and $\rho_{N}\gets (\rho_0,\dots,\rho_{\mu},0,\dots,0)N$ \\
\nl \label{line VIII} $\{k_1,\dots,k_v\}\gets \text{ zero coordinates  of }\rho_{N}$ \\ 
\nl \textbf{if  } $\mu \neq v$  \textbf{then }\\
\nl Calculate $M_\rho N$\\
\nl  $N_\rho\gets M_\rho N$ \\
\nl  $H_\rho\gets rref(N_\rho)$ \\
\nl  \label{line 13} $H'$ gets the matrix obtained removing all rows of $H_\rho$ different from $\varepsilon_i$ for any $i$. \\
\nl $\{k_1,\dots,k_v\}\gets \text{ zero columns
 of }H'$ \\
\textbf{end} \\
\nl \label{line 15} Find $(x_1,\dots,x_v)$ such that $(x_1,\dots,x_v)(\Sigma^{v-1})^{tr}=(\alpha s_0,\delta(\alpha) s_1,\dots, \delta^{v-1}(\alpha) s_{v-1})$\\
\nl \textbf{return} $(e_1,\dots,e_{p-1})$ with $e_i=x_i$ for $i\in \{k_1,\dots,k_v\}$, and zero otherwise. 
    \caption{PGZ full decoding algorithm} 
\label{Algorithm}

\end{algorithm}

\begin{theorem} 
Assume that the error vector $e$ has $v$ non-zero positions. If $v \leq \tau$, then Algorithm \ref{Algorithm} correctly finds the error vector $e$.
\end{theorem}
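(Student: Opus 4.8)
As in the proof of the previous theorem, the steps up to Line \ref{line VII} correctly produce a monic right divisor $\rho$ of the error locator $\lambda$ with $\deg\rho=\mu\le v$ (Proposition \ref{Prop 3} and Lemma \ref{lemma 6}). The plan is to split according to whether $\mu=v$. If $\mu=v$, then $\rho|_r\lambda$ with $\deg\rho=\deg\lambda$ and both polynomials monic, so $\rho=\lambda$; Line \ref{line VIII} then reads off the true support as the zero coordinates of $\rho_N=\mathfrak{v}(\lambda)N$, which by \eqref{root2} mark exactly the $t$ for which $L(\delta^t(\alpha))$ is a right root of $\lambda$, and Line \ref{line 15} recovers the values. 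All the genuine work lies in the branch $\mu\ne v$, i.e. $\mu<v$, where I must show that forming $N_\rho=M_\rho N$, its reduced row echelon form $H_\rho$, the submatrix $H'$ of Line \ref{line 13}, and the zero columns of $H'$ still return the exact support $\{k_1,\dots,k_v\}$.

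First I would set up the linear-algebra dictionary. The matrix $N$ of \eqref{N} is invertible: by \eqref{NW} applied to the full $K$--basis $\{\alpha,\delta(\alpha),\dots,\delta^{p-1}(\alpha)\}$ one has $N\cdot\mathrm{diag}(\alpha,\dots,\delta^{p-1}(\alpha))=W_p(\alpha,\dots,\delta^{p-1}(\alpha))$, which is invertible by Lemma \ref{lemma 3}. Since the rows of $M_\rho$ form a basis of $\mathfrak{v}(\mathcal{R}\rho)$, the row space of $N_\rho=M_\rho N$ equals, by \eqref{root2} and \eqref{N},
\[
W:=\{(f[L(\alpha)],\dots,f[L(\delta^{p-1}(\alpha))]):f\in\mathcal{R}\rho\}\subseteq\Fz^p.
\]
I would then prove a short echelon-form lemma: a standard basis vector $\varepsilon_t$ is one of the rows of $H_\rho=rref(N_\rho)$ if and only if $\varepsilon_t\in W$, since in reduced row echelon form such a vector can only arise as its own pivot row. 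Consequently $H'$ consists exactly of the rows $\{\varepsilon_t:\varepsilon_t\in W\}$, its zero columns are $\{t:\varepsilon_t\notin W\}$, and the whole problem reduces to the identity $\{t:\varepsilon_t\notin W\}=\{k_1,\dots,k_v\}$.

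The bridge to divisibility is the equivalence $\varepsilon_t\in W$ if and only if $\rho|_r g_{\{0,\dots,p-1\}\setminus\{t\}}$. Indeed, any $f\in\mathcal{R}\rho$ whose evaluation vector is $\varepsilon_t$ has every $L(\delta^s(\alpha))$ with $s\ne t$ as a right root, hence $g_{\{0,\dots,p-1\}\setminus\{t\}}|_r f$; since this polynomial has degree $p-1\ge\deg f$, the element $f$ is a nonzero scalar multiple of $g_{\{0,\dots,p-1\}\setminus\{t\}}$, and such a multiple lies in $\mathcal{R}\rho$ precisely when $\rho|_r g_{\{0,\dots,p-1\}\setminus\{t\}}$. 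I would then dispatch the two cases. If $t\notin\{k_1,\dots,k_v\}$, then $\{k_1,\dots,k_v\}\subseteq\{0,\dots,p-1\}\setminus\{t\}$, so $\lambda=g_{\{k_1,\dots,k_v\}}|_r g_{\{0,\dots,p-1\}\setminus\{t\}}$ by the identity $(g_{T_1},g_{T_2})_r=g_{T_1\cap T_2}$ recorded before Proposition \ref{prop fully decomposable left}; as $\rho|_r\lambda$, this gives $\rho|_r g_{\{0,\dots,p-1\}\setminus\{t\}}$ and hence $\varepsilon_t\in W$. If instead $t\in\{k_1,\dots,k_v\}$ and one had $\rho|_r g_{\{0,\dots,p-1\}\setminus\{t\}}$, then $g_{\{0,\dots,p-1\}\setminus\{t\}}$ would be a fully $\alpha$--decomposable left multiple of $\rho$, so Proposition \ref{prop fully decomposable left} would force $\lambda|_r g_{\{0,\dots,p-1\}\setminus\{t\}}$, i.e. $\{k_1,\dots,k_v\}\subseteq\{0,\dots,p-1\}\setminus\{t\}$, contradicting $t\in\{k_1,\dots,k_v\}$; hence $\varepsilon_t\notin W$. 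This yields the desired identity, so Line \ref{line 13} and the subsequent extraction of the zero columns of $H'$ return the exact support.

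Finally, with the support correctly identified in either branch, Line \ref{line 15} solves the system \eqref{delta i s_i}, whose coefficient matrix is the transpose of the Wronskian $W_v(\delta^{k_1}(\alpha),\dots,\delta^{k_v}(\alpha))$ and hence invertible by Lemma \ref{lemma 3} (the $\delta^{k_j}(\alpha)$ being part of a $K$--basis); by Proposition \ref{error values} it returns the exact error values, which placed at $\{k_1,\dots,k_v\}$ produce $e$. The main obstacle will be the case $t\in\{k_1,\dots,k_v\}$: unlike in Algorithm \ref{Algorithm0} one cannot conclude that $\rho$ equals $\lambda$, and the only available leverage is Proposition \ref{prop fully decomposable left}, which is precisely what rules out the spurious membership $\varepsilon_t\in W$ at a true error position. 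The echelon-form bookkeeping around it is routine by comparison.
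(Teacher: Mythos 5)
Your proposal is correct and is essentially the paper's own argument: both proofs hinge on (i) the observation that a standard basis vector $\varepsilon_t$ lying in the row space of a reduced row echelon matrix must occur as one of its rows, so that $H'$ records exactly the positions $t$ with $\varepsilon_t$ in the evaluation space $W$ of $\mathcal{R}\rho$, and (ii) Proposition \ref{prop fully decomposable left} applied to a fully $\alpha$--decomposable left multiple of $\rho$ to exclude spurious such positions at true error locations. The only differences are organizational --- the paper applies that proposition once to $\lambda'=[x-L(\delta^{k_1}(\alpha)),\dots,x-L(\delta^{k_v}(\alpha))]_\ell$ built from all the computed zero columns and concludes $\lambda=\lambda'$ via a dimension count together with the $\varepsilon_d$ argument, whereas you apply it position by position to the degree-$(p-1)$ polynomials $g_{\{0,\dots,p-1\}\setminus\{t\}}$ --- plus one small detail you leave implicit in the ``if'' direction of your bridge, namely that $g_{\{0,\dots,p-1\}\setminus\{t\}}[L(\delta^t(\alpha))]\neq 0$ (otherwise it would be a nonzero polynomial of degree $p-1$ admitting all $p$ right roots $L(\delta^j(\alpha))$, which is impossible since $\{\alpha,\delta(\alpha),\dots,\delta^{p-1}(\alpha)\}$ is a $K$--basis and the corresponding least left common multiple has degree $p$).
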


\begin{proof} 
The output \emph{decoding failure} in Algorithm \ref{Algorithm0} appears when $v \neq \mu$. We will show that, if  the condition $v\neq \mu$ holds, then it is possible to compute a new set of error positions $\{k_1, \dots, k_v \}$ which becomes complete, proving thus the correctness of Algorithm \ref{Algorithm}.  

 Since $H_\rho$ is the reduced row echelon form of $N_\rho = M_\rho N$, the rows of $H_\rho N^{-1}$ form a basis of $\mathfrak{v}(\mathcal{R}\rho)$ as an $\Fz$--vector space. Therefore, the rows of $H'N^{-1}$ generate a vector subspace $V'$ of $\mathfrak{v}(\mathcal{R}\rho)$. Let $f \in \mathcal{R}$ be such that $\mathfrak{v}(f)$ is one of these rows. The components of $\mathfrak{v}(f)N$ corresponding to the zero columns of $H'$ are zero, so $f \in \mathcal{R}\lambda'$, where $\lambda' =  [x-L(\delta^{k_1}(\alpha)), \dots, x-L(\delta^{k_v}(\alpha))]_\ell$. Consequently, the row space of $H'N^{-1}$ is contained in $\mathfrak{v}(\mathcal{R}\lambda')$. Both have dimension $p-v$ as $\Fz$--vector spaces, so they are equal. 

By Proposition \ref{prop fully decomposable left}, $\lambda\mid_r\lambda'$. Suppose that $\lambda\neq \lambda'$, then the matrix $H_\lambda=rref(M_\lambda N)$ contains an additional row $\varepsilon_d$ not in $H'$. Since $\mathcal{R}\lambda\subseteq\mathcal{R}\rho$, $rk\left( \begin{array}{c}
H_\rho\\
\hline
\varepsilon_d \\
 \end{array}\right)=rk(H_\rho)$. This implies that $\varepsilon_d$ must be a row of $H_\rho$, so it is not removed in Line \ref{line 13} of the Algorithm \ref{Algorithm}. Hence $\varepsilon_d$ belongs to $H'$, a contradiction. Then $\lambda=\lambda'$ and the error positions are computed. 
\end{proof}

\section{Examples}\label{Ex}

The last section is devoted to giving some examples. In the first one (Example \ref{exfallo}), we show that the outcome  ``decoding failure''  of Algorithm \ref{Algorithm0} is possible for some corrupted messages, so that Algorithm \ref{Algorithm} is required. It also includes examples of messages which are successfully corrected by Algorithm \ref{Algorithm0}. This is also the case of Example \ref{exsinfallo}.
\begin{example}\label{exfallo}
Let us consider the prime field with $p$ elements $\mathbb{F} = \mathbb{F}_{p}$, $\mathbb{F}(z)$ the field of rational functions over $\mathbb{F}$, and the standard derivation $\delta: \mathbb{F}(z) \rightarrow \mathbb{F}(z)$, that assigns to each rational function its derivative. Clearly,  $\delta$ satisfies the polynomial equation $\delta^{p}  = 0$, since $\delta^{p}(z) = 0$. Set  $\mathcal{R} = \mathbb{F}(z)[x;\delta]/\langle x^{p}\rangle$ as the word ambient ring for differential  convolutional codes. Let set $\alpha = 1/z\in \mathbb{F}(z)$, hence $\delta^i(\alpha)= (-1)^i i!/z^{i+1}$ for $i=0,\ldots, p-1$ and then $\{\delta^i(\alpha)\}_{i=0,\ldots, p-1}$ is  basis of $\mathbb{F}(z)$ over the constant subfield $\mathbb{F}(z^{p})$. Consequently, since $L(\delta^i(\alpha)) = -(i+1)/z$ for $i=0,\dots, p-1$,
$$x^{p} = \lclm{x+\frac{i}{z}, x}^{i=1,\ldots ,p-1}.$$
Moreover, the $(i,j)$-th component of the matrix $N$ defined in \eqref{N} is given in this case by  $$N_{ij}=(-1)^i \frac{(i+j)!}{j!} \frac{1}{z^i},$$ so that $N$ presents an upper triangular matrix form. For instance, when $p=11$, $N$ turns out to be

$$
\begin{tikzpicture}
\matrix [matrix of math nodes, left delimiter=(, right delimiter=)] (m)
{
 1 & 1 & 1 & 1 & 1 & 1 & 1 & 1 & 1 & 1 & 1 \\ 
\frac{10}{z} & \frac{9}{z} & \frac{8}{z} & \frac{7}{z} & \frac{6}{z} & \frac{5}{z} & \frac{4}{z} & \frac{3}{z} & \frac{2}{z} & \frac{1}{z} & 0 \\ 
\frac{2}{z^{2}} & \frac{6}{z^{2}} &  \frac{1}{z^{2}} & \frac{9}{z^{2}} & \frac{8}{z^{2}} & \frac{9}{z^{2}} & \frac{1}{z^{2}} & \frac{6}{z^{2}} & \frac{2}{z^{2}} & 0 & 0 \\ 
\frac{5}{z^{3}} & \frac{9}{z^{3}} & \frac{6}{z^{3}} & \frac{1}{z^{3}} & \frac{10}{z^{3}} & \frac{5}{z^{3}} & \frac{2}{z^{3}} & \frac{6}{z^{3}} & 0 & 0 & 0 \\ 
\frac{2}{z^{4}} & \frac{10}{z^{4}} & \frac{8}{z^{4}} & \frac{4}{z^{4}} & \frac{8}{z^{4}} & \frac{10}{z^{4}} & \frac{2}{z^{4}} & 0 & 0 & 0 & 0 \\ 
\frac{1}{z^{5}} & \frac{6}{z^{5}} & \frac{10}{z^{5}} & \frac{1}{z^{5}} & \frac{5}{z^{5}} & \frac{10}{z^{5}} & 0 & 0 & 0 & 0 & 0 \\ 
\frac{5}{z^{6}} & \frac{2}{z^{6}} & \frac{8}{z^{6}} & \frac{2}{z^{6}} & \frac{5}{z^{6}} & 0 & 0 & 0 & 0 & 0 & 0 \\ 
\frac{9}{z^{7}} & \frac{6}{z^{7}} & \frac{5}{z^{7}} & \frac{2}{z^{7}} & 0 & 0 & 0 & 0 & 0 & 0 & 0 \\ 
\frac{5}{z^{8}} & \frac{1}{z^{8}} & \frac{5}{z^{8}} & 0 & 0 & 0 & 0 & 0 & 0 & 0 & 0 \\ 
\frac{10}{z^{9}} & \frac{1}{z^{9}} & 0 & 0 & 0 & 0 & 0 & 0 & 0 & 0 & 0 \\ 
\frac{10}{z^{10}} & 0 & 0 & 0 & 0 & 0 & 0 & 0 & 0 & 0 & 0 \\
};  
\draw (-3.2,-0.25) -- (0.5,-0.25) -- (0.5,3.5) -- (-3.2,3.5) -- (-3.2,-0.25);
\draw[dashed] (-3.2,-0.25) -- (-3.2,-0.85) -- (0.5,-0.85) -- (0.5,-0.25);
\end{tikzpicture} 
$$
Consider the RS differential convolutional code $\mathfrak{v}(\mathcal{R}g)$, where 
\[
g = \lclm{x+\frac{1}{z}, x+\frac{2}{z}, x+\frac{3}{z}, x+\frac{4}{z}, x+\frac{5}{z}, x+\frac{6}{z}}. 
\]
This code has designed distance $d=7$.  We can compute $g$ by means of the non-commutative extended Euclidean algorithm (see, e.g. \cite[Ch. I, Theorem 4.33]{Bueso/alt:2003}). Alternatively, since $g$ is characterized as the unique monic polynomial of degree $6$ belonging to $\mathcal{R}g$, we see that it can be computed by solving the linear system $(g_0, \dots, g_{5})A = - b$, where  
$A$ is the principal $6\times 6$ of $N$ and $b$ is the vector formed by the first six components of the seventh row. We get thus
\[
g= x^{6} + \frac{3}{z} x^{5} + \frac{10}{z^{2}} x^{4} + \frac{2}{z^{3}} x^{3} + \frac{10}{z^{4}} x^{2} + \frac{8}{z^{5}} x + \frac{5}{z^{6}}. 
\]

In this setting, suppose we need to transmit the message 
$m=(1,z,0,0,z^4) \equiv 1+zx+z^4x^4\in \mathcal{R}$, so that we encode it and get 
\[
mg=z^{4} x^{10} + 3 z^{3} x^{9} + 9 z^{2} x^{8} + 3 z x^{7} + 3 x^{6} + \frac{5}{z} x^{5} + \frac{8}{z^{2}} x^{4} + \frac{7}{z^{3}} x^{3} + \frac{3}{z^{4}} x^{2} + \frac{5}{z^{5}} x + \frac{3}{z^{6}},
\]
or, as a list,
$$\left(\frac{3}{z^{6}}, \frac{5}{z^{5}}, \frac{3}{z^{4}}, \frac{7}{z^{3}}, \frac{8}{z^{2}}, \frac{5}{z}, 3, 3 z, 9 z^{2}, 3 z^{3}, z^{4}\right).$$

Suppose now that, after the transmission, it is received the list
$$\left(\frac{3}{z^{6}}, \frac{5}{z^{5}}, \frac{3}{z^{4}}, \frac{7}{z^{3}}, \frac{8}{z^{2}}, \frac{5}{z}, 0, 3 z, 0, 3 z^{3}, z^{4}\right).$$
That is, the received message contains two errors $e_1 = 8$ and $e_2 =2 z^2$ at positions 6 and 8, respectively. Viewed as a polynomial,
\[
y = z^{4} x^{10} + 3 z^{3} x^{9} + 3 z x^{7} + \frac{5}{z} x^{5} + \frac{8}{z^{2}} x^{4} \\ + \frac{7}{z^{3}} x^{3} + \frac{3}{z^{4}} x^{2} + \frac{5}{z^{5}} x + \frac{3}{z^{6}}.
\]
 In this case, the syndrome matrix and its reduced column echelon form are provided by 
$$S^\tau=\left(\begin{array}{ccc}
\frac{6}{z^{7}} & \frac{9}{z^{8}} & \frac{9}{z^{9}} \\ [0.8ex]
\frac{4}{z^{8}} & \frac{7}{z^{9}} & \frac{7}{z^{10}} \\ [0.8ex]
\frac{5}{z^{9}} & \frac{7}{z^{10}} & \frac{7}{z^{11}} \\ [0.8ex]
\frac{3}{z^{10}} & 0 & 0
\end{array}\right) \text{ and }
rcef(S^\tau)=\left(\begin{array}{cc|c}
1 & 0 & 0 \\
0 & 1 & 0 \\ [0.8ex]
\frac{3}{z^{2}} & \frac{5}{z} & 0 \\ [0.8ex]
\frac{9}{z^{3}} & \frac{1}{z^{2}} & 0
\end{array}\right),
$$
respectively. So that the vector $\rho$ becomes
$$\left(\frac{8}{z^{2}}, \frac{6}{z}, 1, 0, 0, 0, 0, 0, 0, 0, 0\right)$$
and then, multiplying by $N$, we find that
$$\rho_N=\left(\frac{4}{z^{2}},\,\frac{2}{z^{2}},\,\frac{2}{z^{2}},\,\frac{4}{z^{2}},\,\frac{8}{z^{2}},\,\frac{3}{z^{2}},\,0,\,\frac{10}{z^{2}},\,0,\,\frac{3}{z^{2}},\,\frac{8}{z^{2}}\right).$$
We may observe two zeroes at positions 6 and 8. This is compatible with the degree of $\rho$ as a polynomial, so the received message surely contains two errors at positions 6 and 8. Finally, the values of the errors are given by solving the linear system
$$\left(\begin{array}{cc}
\frac{5}{z^{7}} & \frac{9}{z^{8}} \\ [0.8ex]
\frac{5}{z^{9}} & \frac{10}{z^{10}}
\end{array}\right)^{tr}\left(\begin{array}{c} e_1 \\[0.8ex] e_2 \end{array}\right) = 
\left(\begin{array}{c} \frac{6}{z^{7}} \\ [0.8ex] \frac{4}{z^{8}}\end{array}\right).$$
Concretely, $e_1=8$ and $e_2=2z^2$, and the error, as a polynomial, is given by $e=8x^6+2z^2x^8$. Then $mg = y - e$, and dividing by the right by $g$, we may recover the message $m$.

Alternatively, under the same conditions, suppose that we receive the list
$$\left(\frac{3}{z^{6}}, \frac{z^{5} + 5}{z^{5}}, \frac{3}{z^{4}}, \frac{7}{z^{3}}, \frac{8}{z^{2}}, \frac{5}{z}, 0, 3 z, 9 z^{2}, 0, z^{4}\right)$$
i.e. the transmission contains three errors $e_1=1$, $e_2= 8$ and $e_3=8z^3$ at positions 1, 6 and 9, respectively. Now, the syndrome matrix, in its column reduced normal form, is given by
$$rcef(S^\tau)= 
\left(\begin{array}{cc|c}
1 & 0 & 0 \\
0 & 1 & 0 \\ [0.8ex]
\frac{2 z^{5} + 5}{z^{7} + 7 z^{2}} & \frac{9 z^{5} + 6}{z^{6} + 7 z} & 0 \\
\frac{3}{z^{3}} & \frac{8}{z^{2}} & 0
\end{array}\right) $$
and the vector $\rho_N$ becomes
\begin{multline*}\left(\frac{9 z^{5} + 4}{z^{7} + 7 z^{2}},\,\frac{5}{z^{7} + 7 z^{2}},\,\frac{4 z^{5} + 9}{z^{7} + 7 z^{2}},\,\frac{10 z^{5} + 5}{z^{7} + 7 z^{2}},\,\frac{7 z^{5} + 4}{z^{7} + 7 z^{2}},\right. \\ \left. \frac{6 z^{5} + 6}{z^{7} + 7 z^{2}},\,\frac{7 z^{3}}{z^{5} + 7},\,\frac{10 z^{5} + 8}{z^{7} + 7 z^{2}},\,\frac{4 z^{5} + 8}{z^{7} + 7 z^{2}},\,0,\,\frac{9 z^{5} + 6}{z^{7} + 7 z^{2}}\right).\end{multline*}
Since the number of zero components of $\rho_N$ and the degree of $\rho$ are
different, there is a decoding failure. Observe that this is a consequence of the fact
that the determinant of the matrix 
$$\left(\begin{array}{ccc}
e_1 & e_2 & e_3 \\
\delta(e_1) & \delta(e_2) & \delta(e_3) \\
\delta^2(e_1) & \delta^2(e_2) & \delta^2(e_3) 
\end{array}\right)=\left(\begin{array}{ccc}
1 & 8 & 8 z^{3} \\
0 & 0 & 2 z^{2} \\
0 & 0 & 4 z
\end{array}\right)$$
is zero. By the iterated product of $x^i$ by $\rho$ for $i\leq 9$, we then compute a matrix $M_\rho$ whose rows form a basis of the ideal
 $\mathcal{R}\rho$. Hence, we calculate the row reduced normal form of $M_\rho N$, 
 $$H_\rho=\left(\begin{array}{ccccccccccc}
1 & 0 & 0 & 0 & 0 & 0 & 0 & 0 & 0 & 0 & 0 \\
0 & 1 & 0 & 0 & 0 & 0 & 8 z^{5} & 0 & 0 & 0 & 0 \\
0 & 0 & 1 & 0 & 0 & 0 & 0 & 0 & 0 & 0 & 0 \\
0 & 0 & 0 & 1 & 0 & 0 & 0 & 0 & 0 & 0 & 0 \\
0 & 0 & 0 & 0 & 1 & 0 & 0 & 0 & 0 & 0 & 0 \\
0 & 0 & 0 & 0 & 0 & 1 & 0 & 0 & 0 & 0 & 0 \\
0 & 0 & 0 & 0 & 0 & 0 & 0 & 1 & 0 & 0 & 0 \\
0 & 0 & 0 & 0 & 0 & 0 & 0 & 0 & 1 & 0 & 0 \\
0 & 0 & 0 & 0 & 0 & 0 & 0 & 0 & 0 & 0 & 1
\end{array}\right).$$
 The second row of $H_\rho$ is not unitary. Removing that row from $H_\rho$, the resultant matrix has zero columns at positions 1, 6 and 9, the error positions. Finally, by solving the linear system 
 $$\left(\begin{array}{ccc}
\frac{10}{z^{2}} & \frac{2}{z^{3}} & \frac{5}{z^{4}} \\ [0.8ex]
\frac{5}{z^{7}} & \frac{9}{z^{8}} & \frac{5}{z^{9}} \\[0.8ex]
\frac{10}{z^{10}} & \frac{10}{z^{11}} & 0
\end{array}\right)^{tr} 
\left(\begin{array}{c} e_1\\ [0.8ex] e_2 \\ [0.8ex] e_3 \end{array}\right )
= \left(\begin{array}{c}
\frac{10 z^{5} + 10}{z^{7}} \\ [0.8ex] 
\frac{2 z^{5} + 9}{z^{8}} \\ [0.8ex] 
\frac{5 z^{5} + 7}{z^{9}}\end{array}\right),$$
 we find the error values $e_1= 1$, $e_2= 8$ and $e_3=8z^3$, determining completely the error polynomial $e=x+8x^6+8z^3x^9$.
 \end{example}

\begin{example}\label{exsinfallo}
Let us consider the prime field with five elements $\mathbb{F} = \mathbb{F}_{5}$, $\mathbb{F}(z)$ the field of rational functions over $\mathbb{F}$, and the derivation $\delta: \mathbb{F}(z) \rightarrow \mathbb{F}(z)$ defined by $\delta(f) = z f'$ for every $f\in \mathbb{F}(z)$, where $f'$ denotes the derivative of $f$. According to \eqref{polmindelta}, $\delta$ satisfies the polynomial equation $\delta^{5} - \gamma \delta = 0$, where $\gamma = \delta^{5}(z)/\delta (z) = 1$. We may then set the word ambient algebra $\mathcal{R} = \mathbb{F}(z)[x;\delta]/\langle x^{5}-x\rangle$.  

Choose $\alpha = 1/(z+1)\in \mathbb{F}(z)$, which is a cyclic vector for $\delta$. Hence $x^{5}-x$ can be decomposed in  $\mathbb{F}(z)[x;\delta]$ as the least common left multiple
\[
\left[x + \frac{z}{z + 1}, x + \frac{z + 4}{z + 1}, x + \frac{z^{2} + z + 1}{z^{2} + 4}, \right. 
\left. x + \frac{z^{3} + 4 z^{2} + z + 4}{z^{3} + 2 z^{2} + 2 z + 1}, x + \frac{z^{3} + 3 z^{2} + 3 z + 1}{z^{3} + 4 z^{2} + z + 4}\right]_{\ell}.
\]
Let us consider the RS differential convolutional code $C = \mathfrak{v}(\mathcal{R}g)$ of designed distance $3$, where
\small
$$ g  =\lclm{x + \frac{z}{z + 1},
 x + \frac{z + 4}{z + 1}} \\
  = x^{2} + \left(\frac{3 z + 4}{z + 1}\right) x + \frac{2 z^{2}}{z^{2} + 2 z + 1}.$$
  \normalsize

Suppose we need to transmit the message 
$m=(1,0,0) \equiv 1\in \mathcal{R}$, so that the encoded message turns out to be 
$g$ or, viewed as a list,
$$\left(\frac{2 z^{2}}{z^{2} + 2 z + 1}, \frac{3 z + 4}{z + 1}, 1,0,0\right).$$
Now suppose that the received message contains an error  $e =z$ at position 4. That is, we receive the polynomial
$$y = z x^{4} + x^{2} + \left(\frac{3 z + 4}{z + 1}\right) x + \frac{2 z^{2}}{z^{2} + 2 z + 1}.$$
 In this case, the syndrome matrix and its  reduced column echelon form are given by
$$S=\left(\begin{array}{c}
\frac{z^{5} + 4 z^{4} + z^{3} + 4 z^{2}}{z^{5} + 1} \\
\frac{4 z^{2}}{z^{2} + 2 z + 1}
\end{array}\right) \text{ and }
rcef(S)=\left(\begin{array}{c}
1 \\
\frac{4 z^{3} + 2 z^{2} + 2 z + 4}{z^{3} + 4 z^{2} + z + 4}
\end{array}\right),
$$
respectively. So that the vector $\rho$ becomes
$$\left(\frac{z^{3} + 3 z^{2} + 3 z + 1}{z^{3} + 4 z^{2} + z + 4}, 1, 0, 0, 0\right)$$
and then, multiplying by the matrix 
\[
N = \left(\begin{array}{ccc}
1 & 1 & 1 \\
\frac{4 z}{z + 1} & \frac{4 z + 1}{z + 1} & \frac{4 z^{2} + 4 z + 4}{z^{2} + 4} \\
\frac{z^{2} + 4 z}{z^{2} + 2 z + 1} & \frac{z^{2} + z + 1}{z^{2} + 2 z + 1} & \frac{z^{2} + 1}{z^{2} + 2 z + 1} \\
\frac{4 z^{3} + 4 z^{2} + 4 z}{z^{3} + 3 z^{2} + 3 z + 1} & \frac{4 z^{3} + z^{2} + 4 z + 1}{z^{3} + 3 z^{2} + 3 z + 1} & \frac{4 z + 4}{z + 4} \\
\frac{z^{4} + 4 z^{3} + z^{2} + 4 z}{z^{4} + 4 z^{3} + z^{2} + 4 z + 1} & 1 & 1
\end{array}\right. 
\left.\begin{array}{cc}
1 & 1 \\
\frac{4 z^{3} + z^{2} + 4 z + 1}{z^{3} + 2 z^{2} + 2 z + 1} & \frac{4 z^{3} + 2 z^{2} + 2 z + 4}{z^{3} + 4 z^{2} + z + 4} \\
\frac{z^{2} + 2 z + 1}{z^{2} + z + 1} & \frac{z^{2} + 2 z + 1}{z^{2} + 1} \\
\frac{4 z^{2} + 1}{z^{2} + z + 1} & \frac{4 z^{3} + 3 z^{2} + 3 z + 4}{z^{3} + 4 z^{2} + z + 4} \\
1 & 1
\end{array}\right),
\]
it results
\[
\left(\frac{1}{z^{4} + 4},\,\frac{z^{3} + 4 z^{2} + z}{z^{4} + 4},\,\frac{3 z^{2} + 2 z}{z^{3} + z^{2} + z + 1}, \right.  \left. \frac{2 z^{5} + 3 z^{4} + 3 z^{3} + 3 z^{2} + 2 z}{z^{6} + z^{5} + z^{4} + 4 z^{2} + 4 z + 4},\,0\right)
\]
which points out that there is an error at position 4. We, finally, solve the linear system 
$$\frac{z^{4} + 4 z^{3} + z^{2} + 4 z}{z^{5} + 1} e = \frac{z^{5} + 4 z^{4} + z^{3} + 4 z^{2}}{z^{5} + 1}$$
and get that the error value is $e=z$, as expected.
\end{example}


\begin{thebibliography}{1}

\bibitem{Balaji/alt:2018}
S. B. Balaji, M. Nikhil Krishnan, Myna Vajha, Vinayak Ramkumar, Birenjith Sasidharan and P. Vijay Kumar. 
\newblock Erasure coding for distributed storage: an overview
\newblock Sci. China Inf. Sci. 61: 100301, 2018. 

\bibitem{Boulagouaz/Leroy:2013} M. Boulagouaz, and A. Leroy,
\newblock {\em Adv. Math. Commun.} 2 (2013), 463-474.

\bibitem{Bueso/alt:2003}
J. L. Bueso, J. G\'omez-Torrecillas, and A. Verschoren,
\newblock \emph{Algorithmic methods in Non-Commutative Algebras. Applications to Quantum Groups}. 
\newblock Springer, Dordretch, 2003.

\bibitem{Estrada/alt:2008}
S. Estrada, J.R. Garc\'{\i}a-Rozas, J. Peralta and E. S\'anchez-Garc\'{\i}a, 
\newblock Group convolutional codes, 
\newblock {\em Adv. Math. Commun.} 2 (2008), 83--94.

\bibitem{Forney:1970} G. D. Forney, 
\newblock Convolutional codes I: Algebraic structure, \newblock {\em {IEEE}
Trans. Inform. Theory} 16 (1970), 720--738.

\bibitem{GluesingSchmale:2004}
\newblock H. Gluesing-Luerssen and W. Schmale, On cyclic convolutional codes, 
\newblock {\em Acta Appl. Math.} 82 (2004), 183--237.


\bibitem{Gomez/alt:2016}
J. G\'omez-Torrecillas, F. J. Lobillo and G. Navarro, 
\newblock A new perspective of cyclicity in convolutional codes,
\newblock {\em {IEEE} Trans. Inform. Theory } 62 (2016), 2702--2706.

\bibitem{Gomez/alt:2017a}
J.~G{\'{o}}mez{-}Torrecillas, F.~J. Lobillo, and G.~Navarro. 
\newblock  Ideal codes over separable ring extensions.
\newblock {\em {IEEE} Trans. Inform. Theory}, 63 (2017), 2796--2813.
  
\bibitem{Gomez/alt:2017b}
J.~G{\'o}mez-Torrecillas, F.~J. Lobillo, and G.~Navarro.
\newblock A {Sugiyama}-like decoding algorithm for convolutional codes.
\newblock {\em {IEEE} Trans. Inform. Theory}, 63 (2017), 6216--6226.  
  
\bibitem{Gomez/alt:2018a}
J.~G\'{o}mez-Torrecillas, F.~J. Lobillo, and G.~Navarro.
\newblock {Peterson-Gorenstein-Zierler} algorithm for skew {RS} codes.
\newblock {\em Linear and Multilinear Algebra},  66 (2018), 469--487.

\bibitem{Gomez/alt:2018b}
J.~G\'{o}mez-Torrecillas, F.~J. Lobillo, and G.~Navarro.
\newblock Computing Free Distances of Idempotent Convolutional Codes. 
\newblock In Proceedings of the 2018 ACM International Symposium on Symbolic and Algebraic Computation (ISSAC '18), Carlos Arreche (Ed.). ACM, New York, NY, USA, 175-182, 2018. DOI: https://doi.org/10.1145/3208976.3208985

\bibitem{Gomez/alt:2019}
J.~G\'{o}mez-Torrecillas, F.~J. Lobillo, and G.~Navarro.
\newblock Cyclic Distances of Idempotent Convolutional Codes.
\newblock \emph{Journal of Symbolic Computation}, 102 (2021), 37-62.

\bibitem{Jacobson:1996} N. Jacobson,  \textit{Finite-dimensional division algebras over fields.} Springer-Verlag, 1996. Corrected 2nd print 2010.
  
\bibitem{Johannesson/Zigangirov:1999}
R. Johannesson and K. Zigangirov,
\newblock \emph{Fundamentals of Convolutional Coding}. 
\newblock IEEE Series on Digital \& Mobile Communication, IEEE Press, New York, 1999.


\bibitem{Lam/Leroy:1988} T. Y. Lam and A. Leroy,   \newblock Vandermonde and wronskian matrices over
division rings, 
\newblock {\em J. Algebra}, 119 (1988), 308-336.


\bibitem{Lopez/Szabo:2013}
 S. R. L\'opez-Permouth and S. Szabo, 
 \newblock Convolutional codes with additional algebraic structure,
\newblock  {\em J. Pure Appl. Algebra} 217 (2013), 958--972.


\bibitem{Piret:1976} 
 P.~Piret, 
 \newblock Structure and constructions of cyclic convolutional codes, 
 \newblock {\em {IEEE} Trans. Inform. Theory} 22 (1976), 147--155.
 
\bibitem{Roos:1979}
 C. Roos, 
  \newblock On the structure of convolutional and cyclic convolutional codes,
 \newblock {\em {IEEE} Trans. Inform. Theory} 25 (1979), 673--686.

\bibitem{Rosenthal/Smarandache:1999}
J. Rosenthal and R. Smarandache,   
\newblock Maximum distance separable convolutional codes, 
 \newblock {\em Applicable Algebra in Engineering, Communication and Computing} 10  (1999), 15--32.

\bibitem{sage} 
SageMath, the Sage Mathematics Software System (Version 6.1.0),
   The Sage Developers, 2016, http://www.sagemath.org.

\end{thebibliography}
\end{document}